\newcommand{\defprobgoal}[3]{
\begin{tcolorbox}[colback=gray!5!white,colframe=gray!75!black]
  \begin{tabular*}{\textwidth}{@{\extracolsep{\fill}}lr} #1   \\ \end{tabular*}
  {\bf{Input:}} #2  \\
  {\bf{Goal:}} #3
  \end{tcolorbox}
}
\newcommand{\defprob}[3]{
\begin{tcolorbox}[colback=gray!5!white,colframe=gray!75!black]
  \begin{tabular*}{\textwidth}{@{\extracolsep{\fill}}lr} #1   \\ \end{tabular*}
  {\bf{Input:}} #2  \\
  {\bf{Question:}} #3
  \end{tcolorbox}
}
\newcommand{\defparprob}[4]{
\begin{tcolorbox}[colback=gray!5!white,colframe=gray!75!black]
  \begin{tabular*}{\textwidth}{@{\extracolsep{\fill}}lr} #1   \\ \end{tabular*}
  {\bf{Input:}} #2  \\
  {\bf{Parameter:}} #3  \\  
  {\bf{Question:}} #4
  \end{tcolorbox}
}
\newcommand{\partition}{3-{\sc Partition}\xspace}
\newcommand{\NumMatch}{{\sc Numerical 3-Dimensional Matching}\xspace}
\newcommand{\cffafull}{{\sc Conflict free Fair Allocation}\xspace}
\newcommand{\cffa}{{\sc CFFA}\xspace}
\newcommand{\dcffafull}{{\sc Size bounded-Conflict free Fair Allocation}\xspace}
\newcommand{\dcffa}{{\sc Sb-CFFA}\xspace}
\newcommand{\ptime}{|{\mathscr J}|^{\Oh(1)}\xspace}
\newcommand{\util}{\ensuremath{{\rm u}}}
\newcommand{\cffag}{{\sc ${\mathcal G}$-\dcffa}\xspace}
\newcommand{\mwis}{{\sc Sb-MWIS}\xspace}
\newcommand{\sbmwisfull}{{\sc Size bounded-Maximum Weight Independent Set}\xspace}
\newcommand{\sbmwis}{{\sc Sb-MWIS}\xspace}
\newcommand{\mwisg}{{\sc ${\mathcal G}$-\sbmwis}\xspace}
\newcommand{\nph}{{\sf NP}-hard\xspace}
\newcommand{\npc}{{\sf NP}-complete\xspace}
\newcommand{\fpt}{{\sf FPT}\xspace}
\newcommand{\xp}{{\sf XP}\xspace}
\newcommand{\w}{{\sf W}\xspace}
\newcommand{\wone}{{\sf W}$[1]$-hard\xspace}
\newcommand{\Oh}{\ensuremath{\mathcal{O}}\xspace}
\newcommand{\A}{\ensuremath{\mathcal{A}}\xspace}
\newcommand{\I}{\ensuremath{\mathcal{I}}\xspace}
\newcommand{\Co}[1]{\ensuremath{\mathcal{#1}}\xspace}
\newcommand{\X}[1]{\ensuremath{\mathscr{#1}}\xspace}
\newcommand{\conflict}{conflict\xspace}
\newcommand{\object}{{job}\xspace}
\newcommand{\objects}{{jobs}\xspace}
\newcommand{\items}{{jobs}\xspace}
\newcommand{\assign}{assign\xspace}
\newcommand{\assigned}{assigned\xspace}
\newtheorem{defn}{Definition}
\newcommand{\yes}{yes}
\newcommand{\no}{no}
\newcommand{\sse}{\subseteq}
\newtheorem{clm}{Claim}
\newtheorem{obs}{Observation}
\newtheorem{remk}{Remark}
\newcommand{\Ma}[1]{{#1}}
\newcommand{\hide}[1]{}
\begin{document}
\title{How to assign volunteers to tasks compatibly ? A graph theoretic and parameterized approach}
%
%


\author{Sushmita Gupta\inst{1} \and
Pallavi Jain\inst{2} \and
Saket Saurabh\inst{1,3}}

\authorrunning{S. Gupta et al.}
%
\institute{Institute of Mathematical Sciences, HBNI, Chennai 
\email{\{sushmitagupta,saket\}@imsc.res.in}\\
\and
Indian Institute of Technology Jodhpur
\email{pallavi@iitj.ac.in}\\
\and
University of Bergen, Norway
}


%
\maketitle              
\begin{abstract}

In this paper we study a resource allocation problem that encodes correlation between items in terms of \conflict and maximizes the minimum utility of the agents under a conflict free allocation. Admittedly, the problem is computationally hard even under stringent restrictions because it encodes a variant of the {\sc Maximum Weight Independent Set} problem which is one of the canonical hard problems in both classical and parameterized complexity. Recently, this subject was explored by Chiarelli et al.~[Algorithmica'22] from the classical complexity perspective to draw the boundary between {\sf NP}-hardness and tractability for a constant number of agents. The problem was shown to be hard even for small constant number of agents and various other restrictions on the underlying graph. Notwithstanding this computational barrier, we notice that there are several parameters that are worth studying: number of agents, number of items, combinatorial structure that defines the conflict among the items, all of which could well be small under specific circumstancs. Our search rules out several parameters (even when taken together) and takes us towards a characterization of families of input instances that are amenable to polynomial time algorithms when the parameters are constant. In addition to this we give a superior $2^{m}|I|^{\Co{O}(1)}$ algorithm for our problem where $m$ denotes the number of items that significantly beats the exhaustive $\Oh(m^{m})$ algorithm by cleverly using ideas from FFT based fast polynomial multiplication; and we identify simple graph classes relevant to our problem's motivation that admit efficient algorithms.

\begin{dontshow}
In the standard model of fair allocation of resources to agents, every agent has some utility for every resource, and the goal is to \assign resources to agents so that the agents' welfare is maximised. Recent work in this domain considers the incompatibility between resources and \assign only mutually compatible resources to an agent. Motivated by this, we study the \cffafull (\cffa, in short) problem: given a set of agents, a set of resources, a utility function for every agent over a set of resources, and a \conflict graph on the set of resources; the goal is to \assign resources to the agents such that (i) the set of resources \assigned to an agent are compatible with each other, and the (ii) minimum satisfaction of an agent is maximised, where the satisfaction of an agent is the sum of the utility of the \assigned resources. 

\cffa is known to be \nph. We study the problem in the paradigm of parameterized complexity with respect to several natural parameters and obtain algorithmic results. Moreover, we study the problem on several classes of the \conflict graph, e.g., complete graphs and cluster graphs (a collection of cliques). The problem is polynomial-time solvable for a complete graph while \nph for a cluster graph. We also consider the parameter ``distance from tractability''. Towards this, we define some natural distance parameters from a complete graph and exhibit some algorithmic results.

\end{dontshow}

\end{abstract}

\keywords{Conflict free allocation \and fair allocation \and job scheduling \and independent set \and parameterized complexity.}

\section{Introduction} 

Imagine a situation where we are running a non-profit organization that specialises in volunteer work. Specifically, our objective is to bundle the tasks that need to be completed and pair them with 
the available volunteer workers in some meaningful way. Naturally, the volunteer workers have some preference over the available tasks and the tasks may have some inherent compatibility issues in that a person may only be assigned to at most one of the tasks that are mutually incompatible. The incompatibility among the tasks could be due to something as simple as the time interval in which they have to be performed. While it would be ideal to assign all the tasks, it may not actually be possible due to the above compatibility issues and the number of available workers. Moreover, this being a volunteer operation, the workers are "paid" by the satisfaction they derive from completing the bundle of tasks assigned to them. Thus, we want to ensure that the assignment is done in way that gives every volunteer worker the highest level of satisfaction possible. This is the setting of the job assignment problem studied in this article. 

The above described scenario falls under the more general topic of resource allocation which is a central topic  in economics and computation. Resource allocation is an umbrella term that captures a plethora of well-known problem settings where resources are matched to agents in a meaningful way that respects the preferences/choices of agents, and when relevant, resources as well.  Stable matching, generalized assignment, fair division, are some well-known problems that fall under the purview of resource allocation. These topics are extensively studied in economics, (computational) social choice theory, game theory, and computer science, to name a few; and are incredibly versatile and adaptable to a wide variety of terminology, techniques and traditions.

A well-known framework within which resource allocation is studied is in the world of {\sc Job Scheduling} problems on non-identical machines. In this scenario, the machines are acting as agents and the jobs are the tasks such that certain machines are better suited for some jobs than others and this variation is captured by the "satisfaction level" of the machine towards the assigned jobs. Moreover, the jobs have specific time intervals within which they have to be performed and only one job can be scheduled on a machine at a time. Thus, the subset of jobs assigned to a single machine must respect these constraints, and the objective can be both maximization and minimization as well as to simply test feasibility.  Results on the computational aspect of resource allocation that  incorporate  interactions and dependencies between the resources is relatively few. This is the backdrop of our work in this article.  A rather inexhaustive but representative list of papers that take a combinatorial approach in analysing a resource allocation problem and are aligned with our work in this paper is \cite{cheng2022restricted,bezakova2005allocating,kurokawa2018fair,DBLP:conf/atal/EbadianP022,DBLP:conf/atal/BarmanV21,ahmadian2021four,DBLP:conf/iwoca/ChiarelliKMPPS20,DBLP:journals/sigecom/Suksompong21,woeginger1997polynomial}. \Ma{In particular, we can point to the decades old work of Deuermeyer et. al~\cite{deuermeyer1982scheduling} that studies a variant of {\sc Job Scheduling} in which they goal is to assign a set of independent jobs to identical machines in order to maximize the minimal completion time of the jobs. Their {\sf NP}-hardness result for two machines (i.e two agents in our setting) is an early work with similar flavor. They analyse a well-known heuristic called the LPT-algoirthm to capture best-case performance and show that its worst case performance is 4/3-factor removed from optimum.  The more recent work of Chiarelli et. al~\cite{DBLP:conf/iwoca/ChiarelliKMPPS20} that studies "fair allocation" of indivisible items into pairwise disjoint subsets items that maximimizes the minimum satisfaction of the agents is the work that is closest to ours. They too consider various graph classes that capture compatibilities among items and explore the classical complexity boundary between strong {\sf NP}-hardness and pseudo-polynomial tractability for a constant number of agents. Our analysis probes beyond the {\sf NP}-hardness of these problems and explores this world from the lens of parameterized complexity, thereby drawing out the suitability of natural parameters--such as the number of agents, the number of \items, the maximum size of each allocated "bundle", and the structural parameters of the underlying graph--towards yielding polynomial time algorithms when the parameters take on constant values.}

We formally model our setting by viewing it is a two-sided matching market where each worker (i.e an {\it agent}) has a utility function defined over the set of available tasks (call them {\it \items}) such that their satisfaction for a bundle of \items is the sum of the agents' utilities for each individual \object in the bundle. The incompatibilities among the \items is captured by a graph \Co{H} defined on the set of \items such that an edge represents {\it conflict}. The overall objective is to assign bundles--pairwise disjoint subset of \items that induce an independent set in \Co{H} (i.e have no edges among each other)--to agents such that the minimum satisfaction of the agents is maximized. \hide{The \object graph is known as the {\it conflict graph} and the bundles that satisfy this constraint are called {\it conflict-free}.} To make our discussion concrete,  we formally define the computational  problem under study.

%

%
\defprob{\cffafull (\cffa)}{A set of agents \Co{A}, a set of \items \Co{I}, utility function $\util_a\colon \I \rightarrow \mathbb{N}$, for each agent $a\in \A$, a positive integer $\eta \in\mathbb{N}$; and a graph \Co{H} with vertex set \Co{I}.}{Does there exist a function  $\phi \colon \Co{A}  \rightarrow 2^{\Co{I}}$ such that  for every $a\in \Co{A}$, $\phi(a)$ is an independent set in \Co{H}, $\sum_{x\in \phi(a)}\util_a(x) \geq \eta$, and $\phi(a)\cap \phi(a') = \emptyset$ for all $\{a,a'\} \subseteq \A$?}
For each $a\in \Co{A}$, we call $\phi(a)$  a {\em bundle} assigned to the agent $a$. We call graph \Co{H}  the {\em conflict graph}.
\hide{In the literature, \cffa is called conflict free ``maximin fair'' allocation because we are trying to maximize the minimum utility (satisfaction) of agents.}

As alluded to earlier, Deurmeyer et al.~\cite{deuermeyer1982scheduling} studied \cffa with a different name and showed that the problem is \npc even for $2$ agents and even when \Co{H}  is edgeless (that is, no conflict). Since the problem statement has a conflict graph and we need the subsets of allocated resources to be an independent set in \Co{H}, naturally, the classical {\sc Maximum Weight Independent Set (MWIS)}  problem comes into play. 
In this problem given a graph $G$, a weight function $w: V(G)\rightarrow \mathbb{N}$, and an integer $\eta$, the objective is to test whether there exists an independent set $S$ such that $w(S)=\sum_{v\in S} w(v) \geq \eta$. Let $\mathcal G$ be a family of graphs.
Chiarelli et al.~\cite{DBLP:conf/iwoca/ChiarelliKMPPS20} showed that  if {\sc MWIS} is \npc on $\mathcal G$, then \cffa is \npc, when \Co{H} belongs to the graph class  $\mathcal G$, even when there is one agent.  Consequently, it is natural to focus on graph classes in which {\sc MWIS} is polynomial-time solvable.  However,  \cite{DBLP:conf/iwoca/ChiarelliKMPPS20} proves that \cffa remains \npc even for bipartite graphs and their {\em line graphs}. Some polynomial time algorithms  for special instances of the problem and polynomial time approximation algorithms are known for the problem~\cite{deuermeyer1982scheduling,hummel2022fair}. Some papers that have used conflict graphs to capture various constraints on items/jobs that are related to compatibility are~\cite{DARMANN20, Guy09,Factorovich20,Bouveret17}. \hide{Results pertaining to approximation algorithms exist and Bansal et al.\cite{Bansal06} coined the term Santa Claus problem, which corresponds to the variant of the problem studied by Chiarelly et al. when the number of agents is part of the input. Since then various approximation results have appeared exploring different concepts of objective functions and various approximation measures, see, e.g., \cite{Asadpour10,Chakrabarty09}.}

\subsection{Our Results and Methods} 


As described above we formulate our question in graph theoretic terms and analyze the problem in the realm of parameterized complexity. We note that this is a natural choice of algorithmic toolkit for our problem because \cffa {is naturally governed by several parameters such as the number of agents (${\sf \#agents}$), the number of \items (${\sf \#\items}$), the maximum size of a bundle (${\sf bundleSize}$) in the solution, and the utility of any agent $\eta$. This makes it  a natural candidate for a study from the viewpoint of parameterized complexity.  Moreover, we also note that for certain specific situations the \object graph may have special structures that can be exploited for designing efficient algorithms. In what follows, we describe certain scenarios where the ``small-ness" of the parameters and the underlying graph structure comes into focus and allows us to discuss our results more concretely.


\noindent {\bf Input/Output Parameters.}
 The first set of parameters, that we study, consists of 
$n={\sf \#agents}$, $m={\sf \#\items}$, $s={\sf bundleSize}$, and  $\eta$.
With this set of parameters, we obtain the following set of results. 

\smallskip
\noindent{\bf Brief overview of parameterized complexity:} The goal of parameterized complexity is to find ways of solving \nph problems more efficiently than exhaustive search: the aim is to
restrict the combinatorial explosion to a parameter that is likely to 
much smaller than the input size in families of input instances. Formally, a {\em parameterization}
of a problem is assigning an integer $k$  to each input instance of the problem.  We
say that a parameterized problem is {\em fixed-parameter tractable
  }(\fpt)  if there is an algorithm that solves the problem in time
$f(k)\cdot |I|^{\Oh(1)}$, where $|I|$ is the size of the input and $f$ is an
arbitrary computable function depending on the parameter $k$
only. A more general class of parameterized algorithm is the XP algorithms where a parameterized problem is {\em slicewise poly
  }(\xp)  if there is an algorithm that solves the problem in time
$|I|^{f(k)}$, where $|I|$ is the size of the input and $f$ is an
arbitrary computable function depending on the parameter $k$
only. Moreover, we will refer to such algorithms as an \fpt  (resp. \xp) algorithm and the problem to have an $\fpt(k)$ (resp. $\xp(k)$) algorithm. For more details on the subject, we refer to the textbooks~\cite{ParamAlgorithms15b,fg,downey}.

\noindent{\bf Graph classes under investigation:} We begin our discussion describing the simple graph classes that bookend our parameterized study: the two extremes are {\it utopia} and {\it chaos} and in between we have potentially an infinite possibility of graph classes in which to study our problem. In \Cref{ss:closer-look} we delve deeper into what parameters are meaningful for further study and draw out the connections between the graph classes and fruitful parameterization.


\begin{enumerate}
\item {\bf Utopia}: when there are no incompatibilities, and the \conflict graph \Co{H} is edgeless. In this scenario the problem is hard even when bundle size is a small constant, \Cref{thm:nph-edgeless}.

\item {\bf Chaos}: when every \object is incompatible with every other \object, and so \conflict graph \Co{H} is complete. In this scenario, the problem becomes rather easy to solve since each bundle can only be of size at most one, \Cref{thm:poly-clique}.

\item {\bf Incompatibilities are highly localized}: \Co{H} is a {\it cluster graph}, a graph that is compromised of vertex disjoint cliques. Such a situation may occur quite naturally such as in the following scenario.

In the example of the assignment of volunteers to tasks, consider the situation where the tasks can only be completed on specific days and specific times. Consequently, all the tasks that can be completed on day 1 form a clique, the ones for day 2 form another clique and so on. Moreover, the volunteers are working after hours for say two hours each day and it has been decided that each worker can only work for the same number of hours each day to manage their work load.  In this scenario a worker can be assigned at most task per day. This is the intuitive basis for the algorithm described in \Cref{thm:hardnessForcluster,thm:2cliques} and \Cref{obs:cluster-2agent}.

\item {\bf ``distance'' $t$ away from chaos}: \Co{H} has at least ${m \choose 2} -t$ edges, \Cref{thm:fpt-t+n,thm:fpt-t,thm:polytime-n-2}.

\end{enumerate}



%
\hide{Additionally, from the mathematical perspective, the parameters studied in this paper are natural input/output parameters that arise in the paradigm of parameterized complexity. }

If not a constant, it is reasonable to expect these parameters to be fairly small compared to the input.  

 

\subsection{Closer look at the parameters and search for fruitful graph families}\label{ss:closer-look}

\begin{description}[wide=0pt]

\item[I. (Superior) $\fpt(m)$ algorithm exists:] We note that  \cffa admits a trivial \fpt algorithm parameterized by $m$ by enumerating all possible $(n+1)^{m}$ ways of assigning the \items to the agents, where each \object has $(n+1)$ choices of agents to choose from. Since $m\geq n$, we get a running time of $\Oh(m^m)$. However, in {\bf \Cref{sec:fpt-input}} we present an algorithm with running time $2^{m}(n+m)^{\Oh(1)}$, which is clearly far superior. It is an algebraic algorithm that recasts the problem as that of 
of polynomial multiplication that mimics subset convolution. This suggests, in contrast to no $\fpt(n)$ algorithm, that  the larger parameter $m$ is sufficient in constraining the (exponential growth in the) time complexity as function of itself.
\smallskip

\item[II. No $\xp(n)$ algorithm exists:] We first note that since \cffa is \npc even for one agent (due to the reduction from {\sc MWIS} by Chiarelli et. al~\cite{DBLP:conf/iwoca/ChiarelliKMPPS20}), hence, we cannot even hope for an $(n+m)^{f(n)}$ time algorithm for any function $f$, unless {\sf P=NP}. Thus, there is no hope for an \fpt algorithm with respect to $n$. This appears to be a confirmation that the number of agents (volunteers) which is likely to smaller than the number of \items (tasks) is inadequate in terms of expressing the (exponential growth in the) time complexity as a function of itself.

\smallskip

\item[III. No $\xp(s)$ algorithm when \Co{H} is edgeless:] In {\bf \Cref{sec:agents+bundlesize}} we show that \cffa is \npc when \Co{H} is edgeless and $s=3$. This implies that we cannot even hope for an $(n+m)^{g(s)}$ time algorithm for any function $g$, unless {\sf P=NP}.  

Therefore, $n$ and $s$ are inadequate parameters individually, hence it is natural to consider them together.




\item[IV. When both $n$ and $s$ are small:] We note that $n$ and $s$ being small valued compared to $m$ is quite realistic because there are likely to be far too many tasks at hand but relatively fewer volunteers; and the assignment should not overburden any of them and thus the number of assigned tasks should be small.  This motivates us to consider the parameter $n+s$.

However, hoping that \cffa is \fpt parameterized by $n+s$ in general graphs is futile because the problem generalizes the {\sc MWIS} problem. Hence, we can only expect to obtain an $\fpt(n+s)$ algorithm for special classes of graphs. Consequently, our exploration moves towards identifying graph classes which may admit such an algorithm.

Towards that we note that an $\fpt(n+s)$ algorithm for the underlying decision problem that incorporates the bundle size $s$ (defined formally below) yields an $\fpt(n+s)$ algorithm for \cffa.  


 \defprob{\dcffafull (\dcffa)}{A set of agents \Co{A}, a set of \items~ \Co{I}, utility function $\util_a\colon \I \rightarrow \mathbb{N}$, for each agent $a\in \A$, positive integers $s,\eta \in \mathbb{Z}_{>0}$, and a graph \Co{H} with vertex set \Co{I}.}{Does there exist a function  $\phi \colon \Co{A}  \rightarrow 2^{\Co{I}}$ such that  for every agent $a\in \Co{A}$, bundle $\phi(a)$ is an independent set in \Co{H}, $|\phi(a)|\leq s$, $\sum_{x\in \phi(a)}\util_a(x) \geq \eta$, and $\phi(a)\cap \phi(a') = \emptyset$ for all $\{a,a'\} \subseteq \A$?}

To elaborate further, an \fpt algorithm for \dcffa would imply an \fpt algorithm for \cffa, because $s \leq m$ and an algorithm that makes subroutine calls to an $\fpt(n+s)$ algorithm for \dcffa for increasing values of $s$ (from $1$ to $m$) is an $\fpt(n+s)$ algorithm for \cffa.  Hence, from now we focus our attention towards an $\fpt(n+s)$ algorithm for \dcffa for reasonable graph classes.



$\star$ {\it Two parameters $n$, $s$ in search of a graph family:} A closer look at the hardness proof of \cffa from {\sc MWIS} by Chiarelli et. al~\cite{DBLP:conf/iwoca/ChiarelliKMPPS20} yields a hardness result for \dcffa from a size bounded version of {\sc MWIS}, defined below. Note that in this problem the size of the (independent set) solution is upper bounded by the parameter and this distinguishes it from the (standard) maximum weight independent set solution.


\defparprob{\sbmwisfull (\sbmwis) }{A  graph $G$, positive integers $k$ and $\rho$, a weight function $w: V(G) \rightarrow \mathbb{N}$.}{$k$}{Does there exist an independent set $S$ of size at most $k$ such that 
$\sum_{v\in S}w(v)\geq \rho$? }

In that reduction, $s=k$ and $n=1$. Hence, we can deduce the following connection: {\it any} $\fpt(n+s)$ algorithm for \dcffa will yield an $\fpt(k)$ algorithm for \sbmwis; and conversely, any hardness result that holds for \sbmwis with respect to $k$ must also hold for \dcffa with respect to $n+s$. The latter condition allows us to narrow down the potential graph classes that admit an $\fpt(n+s)$ algorithm for \dcffa. Since \sbmwis is a clear generalization (by setting $\rho=k$ and unit weight function) of the {\sc Independent Set} problem, a very well-studied problem in the realm of parameterized complexity and indeed the wider field of graph algorithms. This connection allows us to demarcate the tractability border of our problem \dcffa via the computational landscape of {\sc Independent Set}. In the paragraphs to follow we will flesh out the connection in more explicit terms and derive a result, {\bf \Cref{thm:equiavlence}}, that tightly characterises the tractability boundary of \dcffa with respect to $n+s$.







$\star$ {\it {\sc Independent Set} as the guiding light for \dcffa:} In the field of parameterized complexity  {\sc Independent Set}  has been extensively studied on families of graphs that satisfy some structural properties. We take the same exploration path for our problem \dcffa. The graph classes in which {\sc Independent Set} has an $\fpt(k)$ algorithm is a potential field for an \fpt algorithms for \dcffa. While this is not a guarantee, and we need to argue the connection precisely.

Let $\mathcal G$ be a family of {\em hereditary} graphs. That is, if $G\in {\mathcal G} $, then all induced subgraphs of $G$ belong to $\mathcal G$. In other words, $\mathcal G$ is closed under taking induced subgraphs.



For a hereditary family \Co{G}, \mwisg denotes the restriction of \mwis where the input graph $G\in \mathcal G$. Thus, a natural question is what happens when \dcffa is restricted to a graph class for which  \mwisg is fixed parameter tractable with respect to $k$? Given a hereditary family \Co{G}, we define \cffag similarly to \dcffa such that the graph \Co{H} belongs to the family \Co{G}. Tthe tractability of \mwisg does not immediately imply tractability of \cffag. Indeed, even if \mwisg is \fpt when parameterized by $k$, we cannot hope for an $(n+m)^{f(n)}$ time algorithm for \cffag for any function $f$, unless {\sf P=NP}, because the {\sf NP}-hardness of {\sc Independent Set} implies the {\sf NP}-hardness for \cffa even for one agent, i.e $n=1$.  Due to \Cref{thm:hardnessForcluster} (explained later), we also cannot hope for an $(n+m)^{f(s)}$ time algorithm for \cffag for any function $f$, unless {\sf P=NP}, even if  \mwisg has an $\fpt(k)$ algorithm. These 
 results imply that we cannot even expect \cffag to have an \xp algorithm with respect to either $n$ or $s$ individually, let alone an \fpt algorithm.  
 However, the following result completely characterizes the parameterized complexity of \cffag with respect to $n+s$ vis-a-vis the parameterized complexity of \mwisg with respect to $k$.



\begin{theorem}
\label{thm:equiavlence}
Let $\mathcal G$  be a hereditary family of graphs. Then, \cffag is \fpt parameterized by $n+s$ if and only if  \mwisg is \fpt parameterized by $k$. 
\end{theorem}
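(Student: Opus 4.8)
The plan is to prove the two directions separately. The forward implication is essentially the reduction already sketched in the text, while the backward implication is the substantive part and requires a color-coding (random-partitioning) argument whose correctness crucially exploits the heredity of $\mathcal{G}$.

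For the forward direction, suppose \cffag admits an \fpt algorithm with running time $f(n+s)\cdot|\Co{I}|^{\Oh(1)}$. Given a \mwisg instance $(G,w,k,\rho)$ with $G\in\mathcal{G}$, I would build a \cffag instance with a single agent ($n=1$), item set $\Co{I}=V(G)$, conflict graph $\Co{H}=G\in\mathcal{G}$, utility $\util_1=w$, bundle bound $s=k$, and threshold $\eta=\rho$. A feasible assignment is then precisely a single independent set of size at most $k$ and weight at least $\rho$, so the two instances are equivalent; running the assumed algorithm takes time $f(1+k)\cdot|\Co{I}|^{\Oh(1)}$, which is \fpt in $k$.

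For the backward direction, assume \mwisg has an $\fpt(k)$ algorithm, treated as an oracle that, given any $G'\in\mathcal{G}$, weight function $w'$, and integers $k',\rho'$, decides whether $G'$ has an independent set of size at most $k'$ and weight at least $\rho'$. The key observation is that any solution $(S_1,\dots,S_n)$ to a \cffag instance uses at most $ns$ items in total and, being made of pairwise disjoint bundles, decomposes into independent per-agent subproblems once we know which items ``belong'' to which agent. I would therefore partition $\Co{I}$ uniformly at random into $n$ parts $I_1,\dots,I_n$ and, for each agent $a$, call the oracle on $\mathcal{H}[I_a]$ with weight $\util_a$, size bound $s$, and threshold $\eta$. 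Because $\mathcal{G}$ is hereditary, $\mathcal{H}[I_a]\in\mathcal{G}$, so the oracle applies; and because the parts are disjoint, the independent sets returned for different agents are automatically disjoint, yielding a genuine feasible assignment. For a fixed solution, the probability that every $S_a$ lands entirely in $I_a$ is at least $n^{-ns}$ (the bundles are disjoint, so the at most $ns$ relevant items are assigned independently), hence repeating $\Oh(n^{ns})$ times succeeds with high probability; since $n,s\le n+s$, the factor $n^{ns}$ is a function of $n+s$ alone.

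To derandomize, I would replace the random partition by an explicit family of functions $\Co{I}\to[n]$ that, for every set $T$ of at most $ns$ items and every target assignment $g\colon T\to[n]$, contains a function agreeing with $g$ on $T$; such a family of size roughly $2^{\Oh(ns)}\cdot n^{ns}\cdot\log|\Co{I}|$ is obtained by composing an $(|\Co{I}|,ns)$-perfect hash family with all maps $[ns]\to[n]$. Trying every function in this family and accepting if some function makes all $n$ oracle calls succeed gives a deterministic $\fpt(n+s)$ algorithm. The main obstacle to get right is precisely this decomposition into disjoint single-agent instances: one must verify both that heredity legitimizes the oracle calls on induced subgraphs and that capturing a solution only requires each $S_a$ to be separated into its own part (rather than the oracle recovering $S_a$ itself), so that the per-agent existence guarantees compose into a valid global assignment.
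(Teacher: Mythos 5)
Your proof is correct, and your forward direction coincides with the paper's (one agent, $s=k$, $\eta=\rho$); but your backward direction takes a genuinely different route. The paper also combines random coloring with heredity and oracle calls on induced subgraphs, but it colors the \items with $ns$ colors, asks only that the solution items receive \emph{distinct} colors (probability $e^{-ns}$), and then assembles bundles via a $3^{ns}$-time dynamic program over subsets of colors, where each table entry $T[i,S]$ is filled by invoking the \mwisg oracle on the subgraph induced by the color classes in $S$; derandomization then needs nothing more than an $(m,ns)$-perfect hash family, since injectivity on the solution suffices and the DP handles the grouping of colors into bundles and their attachment to agents. You instead color with only $n$ colors, one per agent, and demand the stronger event that every bundle $S_a$ lands entirely in its own part $I_a$; this eliminates the dynamic programming entirely (just $n$ independent oracle calls per coloring, with disjointness of the witnesses for free because the parts are disjoint), but the success probability drops from $e^{-ns}$ to $n^{-ns}$, and correspondingly your derandomized family must realize \emph{every} map from the at most $ns$ solution items to the agents rather than merely be injective on them, which is why you must compose a perfect hash family with all $n^{ns}$ maps $[ns]\to[n]$. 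The trade-off: your algorithm is conceptually simpler --- no subset DP, no induction over table entries --- but runs in time roughly $n^{ns}2^{\Oh(ns)}\cdot h(s)\cdot(n+m)^{\Oh(1)}$, whereas the paper's runs in $\Oh((3e)^{ns}(ns)^{\log ns}\cdot h(s)\cdot(n+m)^{\Oh(1)})$, i.e., single-exponential in $ns$ with base independent of $n$; both bounds are \fpt in $n+s$, so both prove the theorem, and both use heredity in exactly the same way (to legitimize oracle calls on induced subgraphs). One cosmetic point: an \fpt algorithm for \mwisg is a decision procedure, so your phrase ``the independent sets returned'' should be read existentially --- all $n$ calls answering yes already witnesses that a feasible assignment exists, which is all the decision problem requires (actual bundles can be recovered by self-reduction if desired).
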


Theorem~\ref{thm:equiavlence} implies that \cffag is \fpt when $ \mathcal G$ is the family of interval graphs, chordal graphs, perfect graphs, planar graphs, bipartite graphs, graphs of bounded degeneracy, to name a few, \cite{DBLP:journals/siamcomp/FominTV15}.
%

\smallskip

\noindent{\bf Overview of Theorem~\ref{thm:equiavlence}.} This is one of the main  algorithmic results of this article. The result is obtained by combining the classical color coding technique of Alon-Yuster-Zwick~\cite{alon1995color},  applied on the set of \items, together with a dynamic programming algorithm to find a ``colorful solution''. In the dynamic programming phase of the algorithm, we  invoke an $\fpt(k)$ algorithm for \mwisg. 


While there are papers that study hereditary graph classes to give \fpt algorithms for {\sc MWIS} (the standard maximum weighted independent set) problem \cite{Dabrowski12},  we are not aware of known classes of graphs for which \mwis (the size bounded variant of maximum weighted independent set problem) is \fpt parameterized by $k$. Hence, we first identify some such graph classes.

 We define an {\em independence friendly class} as follows. Let $f  \colon \mathbb{N} \to \mathbb{N}$ be a monotonically increasing function, that is, invertible. A  
graph class $\mathcal G$ is called {\em $f$-independence friendly class ($f$-ifc)} if $\mathcal G$ is hereditary and for every 
$G\in \mathcal G$ of size $n$ has an independent set of size  $f(n)$. Observe that the families of bipartite graphs, planar graphs, graphs of bounded degeneracy, graphs excluding some fixed clique as an induced subgraphs are $f$-independence friendly classes with appropriate function $f$. For example, for bipartite graphs $f(n)=\nicefrac{n}{2}$ and for $d$-degenerate graphs $f(n)=\nicefrac{n}{(d+1)}$. For  graphs excluding some fixed clique as an induced subgraph, we can obtain the desired $f$ by looking at Ramsey numbers. It is the minimum number of vertices, $n = R(r, s)$, such that all undirected simple graphs of order $n$, contain a clique of size $r$, or an independent set of size $s$. It is known to be upper bounded by $R(r, s) \leq \binom{r + s - 2}{r - 1}$~\cite{jukna2011extremal}. 
%
We prove the following result for \mwisg when $\Co{G}$ is $f$-ifc. 
\begin{theorem}\label{lem:mwis-degenerate}
Let $\mathcal G$ be an $f$-independence friendly class. Then, there exists an algorithm for \mwisg running in time $\Oh((f^{-1}(k))^k\cdot (n+m)^{\Oh(1)})$. 
\end{theorem}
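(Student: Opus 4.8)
The plan is to solve \mwisg on a graph $G\in\Co{G}$ by a bounded search tree in which each node guesses the \emph{heaviest} vertex of an optimal solution out of a small candidate set whose size is governed by the $f$-independence-friendly structure of $\Co{G}$. First I would phrase the recursive subproblem as follows: given an induced subgraph $H=G[U]$ (which again lies in $\Co{G}$ since $\Co{G}$ is hereditary) and a residual budget $k'\le k$, compute the maximum weight of an independent set $S\subseteq U$ with $|S|\le k'$; the top-level call uses $U=V(G)$ and $k'=k$, and we finally compare the returned value with $\rho$.

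The key step is the following branching lemma. Fix a linear order on $U$ refining decreasing weight (breaking ties consistently), and let $T\subseteq U$ consist of the $\min\{f^{-1}(k'),|U|\}$ top-ranked vertices. I claim some optimal solution has its heaviest vertex in $T$. Indeed, suppose an optimal independent set $S^{\star}$ of size $j\le k'$ avoids $T$. Since $\Co{G}$ is $f$-independence friendly and $|T|=f^{-1}(k')$, the induced subgraph $G[T]\in\Co{G}$ contains an independent set $I$ of size at least $f(f^{-1}(k'))\ge k'\ge j$. Picking any $j$ vertices of $I$ gives an independent set of size $j\le k'$ every vertex of which outranks (hence is at least as heavy as) every vertex of $S^{\star}$, so its total weight is at least $w(S^{\star})$; thus there is an optimal solution meeting $T$, whose heaviest element lies in $T$. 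Crucially, only the \emph{existence} of $I$ is used, so the algorithm never has to construct a large independent set, and this is exactly what pins the candidate set down to $f^{-1}(k')$ vertices.

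Guided by this lemma, at each node the algorithm either stops (contributing weight $0$) or branches over the at most $f^{-1}(k')$ vertices $t\in T$, guessing $t$ as the heaviest solution vertex: it adds $w(t)$, deletes $t$ together with its neighbours and every vertex outranking $t$, and recurses with budget $k'-1$. Correctness holds because, for the optimal solution singled out by the lemma, the right $t$ is among the tried candidates. As $f^{-1}$ is increasing, the branching factor is at most $f^{-1}(k)$ at every node while the budget strictly decreases, so the tree has depth at most $k$ and at most $(f^{-1}(k))^{k}$ leaves; each node only sorts $U$ and updates the graph in $(n+m)^{\Oh(1)}$ time, yielding the claimed $\Oh\bigl((f^{-1}(k))^{k}\cdot(n+m)^{\Oh(1)}\bigr)$ bound. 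The main obstacle is proving the branching lemma cleanly—getting the weight-exchange inequality and the tie-breaking right so that replacing $S^{\star}$ by $j$ vertices of $I$ never loses weight—together with the observation that the $f$-ifc property is invoked purely as an existence guarantee.
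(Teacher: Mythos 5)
Your proof is correct, but it follows a genuinely different route from the paper's. The paper's algorithm is target-driven: it defines ${\sf HighWeight}=\{v \colon w(v)\geq \nicefrac{\rho}{k}\}$ and splits into two cases. If $f(|{\sf HighWeight}|)\geq k$, the $f$-ifc property guarantees that \emph{any} subset $X\subseteq {\sf HighWeight}$ of size $f^{-1}(k)$ already contains an independent set of size $k$ (which automatically has weight $\geq k\cdot\nicefrac{\rho}{k}=\rho$), so the algorithm terminates by enumerating the $k$-subsets of $X$; otherwise $|{\sf HighWeight}|<f^{-1}(k)$, and an averaging argument shows every solution must contain some vertex of ${\sf HighWeight}$, so it branches over those at most $f^{-1}(k)$ vertices and recurses on $(G-N[v],k-1,\rho-w(v))$. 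Your algorithm instead ignores the target $\rho$ entirely and solves the optimization version: you branch over the $f^{-1}(k')$ \emph{heaviest} vertices, justified by an exchange lemma stating that if an optimal solution avoided this top set $T$, the independent set of size $\geq k'$ that the ifc property guarantees inside $G[T]$ could replace it vertex-for-vertex without losing weight. The two proofs thus deploy the ifc property in dual ways: the paper uses it \emph{constructively} (a solution exists inside the candidate set, so enumerate there), while you use it as a \emph{certificate of dominance} (skipping the candidate set can never be strictly better). Your approach buys a single uniform branching rule with no case split and no dependence on $\rho$ (so it directly computes the optimum, not just the decision answer); the paper's buys an immediate, recursion-free termination in the large-${\sf HighWeight}$ case and a branching rule whose correctness is a one-line averaging argument rather than an exchange lemma requiring careful tie-breaking. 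Both arguments rely on heredity and on the convention $f(f^{-1}(k))\geq k$, and both give the same $\Oh((f^{-1}(k))^k\cdot (n+m)^{\Oh(1)})$ bound.
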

We also give a polynomial-time algorithm for \mwisg when $\Co{G}$ is a cluster graph. In contrast, \cffa is \nph when the conflict graph is a cluster graph as proved in \Cref{thm:hardnessForcluster}.

Finally, we show that \dcffa  is  \wone with respect to $n+s+\eta$. We reduce it from the {\sc Independent Set} problem.  Given an instance $(G,k)$ of  the {\sc Independent Set}, we can construct an instance of \cffa with only one agent, \items as $V(G)$, unit utility function, $\Co{H}=G$, and $s=\eta=k$. Since {\sc Independent Set} is \wone~\cite{downey1995fixed},  we get the following. 

\begin{obs}\label{obs:wh-eta}
\dcffa is \wone with respect to $n+s+\eta$. 
\end{obs}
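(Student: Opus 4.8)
The plan is to establish \Cref{obs:wh-eta} by a simple parameterized reduction from the \textsc{Independent Set} problem, which is well known to be \wone{} parameterized by the solution size $k$~\cite{downey1995fixed}. The key observation is that the size-bounded constraint of \dcffa, together with the conflict graph structure, lets us encode \textsc{Independent Set} directly, and crucially all three parameters $n$, $s$, and $\eta$ can be kept simultaneously small (each bounded by $k$).

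\medskip

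Concretely, given an instance $(G,k)$ of \textsc{Independent Set}, I would build the following instance of \dcffa. Take a single agent, so $\A = \{a\}$ and hence $n = 1$. Set the \object set to $\I = V(G)$ and let the conflict graph be $\Co{H} = G$ itself. Assign the unit utility function $\util_a(x) = 1$ for every $x \in \I$. Finally set the bundle size bound $s = k$ and the utility threshold $\eta = k$. This construction is clearly computable in polynomial time, and the resulting parameter satisfies $n + s + \eta = 1 + k + k = 2k + 1$, which is a function of $k$ alone, as required for a parameterized reduction.

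\medskip

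The correctness argument is a short equivalence. In the forward direction, if $G$ has an independent set $S$ of size $k$, then assigning $\phi(a) = S$ gives a bundle that is independent in $\Co{H} = G$, has size $|S| = k \leq s$, and total utility $\sum_{x \in S}\util_a(x) = |S| = k \geq \eta$; since there is only one agent the disjointness condition is vacuous, so this is a \yes-instance of \dcffa. In the reverse direction, any feasible bundle $\phi(a)$ is an independent set in $\Co{H} = G$ with $\sum_{x\in\phi(a)}\util_a(x) = |\phi(a)| \geq \eta = k$, so $\phi(a)$ is an independent set of $G$ of size at least $k$; taking any $k$ of its vertices yields an independent set of size exactly $k$ in $G$. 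Hence the two instances are equivalent.

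\medskip

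There is essentially no hard step here: the reduction is direct and the main point is simply to verify that the single-agent, unit-weight encoding forces the three parameters $n$, $s$, $\eta$ to be bounded in $k$ simultaneously, so that \wone-hardness of \textsc{Independent Set} transfers to \dcffa under the combined parameter $n+s+\eta$. The only thing worth stating carefully is the parameter bound $n+s+\eta = \Oh(k)$, which guarantees that this is a valid parameterized reduction and not merely a polynomial-time many-one reduction.
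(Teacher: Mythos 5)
Your proposal is correct and matches the paper's own proof exactly: the paper also reduces from {\sc Independent Set} with a single agent, $\I = V(G)$, $\Co{H}=G$, unit utilities, and $s=\eta=k$, giving \wone-hardness of \dcffa with respect to $n+s+\eta$. Your write-up simply spells out the correctness equivalence and the parameter bound more explicitly than the paper does.
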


\end{description}

Next, we move to our next set of parameters.

\smallskip
 
\subsection{Structural Parameterization via graph classes.}\label{ss:graph-classes}
Our next set of results is motivated by the following result whose proof is in {\bf \Cref{sec:distance-param}}.

\begin{theorem}\label{thm:poly-clique}
There exists an algorithm that solves \cffa in  polynomial time when the \conflict graph is a complete graph.
\end{theorem}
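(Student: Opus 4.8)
The plan is to exploit the fact that independent sets in a complete graph are trivial. First I would observe that when the \conflict graph \Co{H} is complete on the item set \Co{I}, every independent set of \Co{H} has size at most one; consequently each bundle $\phi(a)$ is either empty or a single \object. Since $\eta$ is a positive integer while an empty bundle yields utility $\sum_{x\in\emptyset}\util_a(x)=0<\eta$, no agent can be left with an empty bundle in a feasible allocation. Hence a feasible $\phi$ must assign to every agent exactly one \object whose utility is at least $\eta$, and the disjointness requirement $\phi(a)\cap\phi(a')=\emptyset$ forces distinct agents to receive distinct \items.

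This reduces \cffa with a complete conflict graph to a bipartite matching problem. I would build an auxiliary bipartite graph $B$ with the agents \Co{A} on one side and the \items \Co{I} on the other, placing an edge between agent $a$ and \object $x$ exactly when $\util_a(x)\geq\eta$. A feasible allocation corresponds precisely to a matching in $B$ that saturates all of \Co{A}: each agent is matched to a single \object of sufficient utility (an edge of $B$), and the injectivity of the matching encodes the pairwise disjointness of the bundles. Conversely, any \Co{A}-saturating matching yields a feasible allocation by setting $\phi(a)$ to be the \object matched to $a$ (and the remaining, unmatched \items are simply left unassigned).

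The algorithm then computes a maximum matching in $B$ using any standard polynomial-time bipartite matching routine (for instance, the Hopcroft--Karp algorithm) and returns \yes if and only if the matching saturates every agent, i.e.\ has size $|\Co{A}|$. Since $B$ has $n+m$ vertices and at most $nm$ edges, this runs in time polynomial in $|\I|$.

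The only thing that needs careful verification is the two-way correctness of this reduction, and this is straightforward: the completeness of \Co{H} collapses the independence constraint entirely, so the problem degenerates to testing the existence of a system of distinct representatives for the agents. There is no real combinatorial obstacle here; if one prefers a non-algorithmic feasibility criterion, the same condition can be read off from Hall's theorem applied to $B$.
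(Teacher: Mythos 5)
Your proposal is correct and follows essentially the same route as the paper: both reduce the problem, via the observation that a complete conflict graph forces every bundle to be a single \object of utility at least $\eta$, to finding an \Co{A}-saturating matching in the auxiliary bipartite graph with edges $ax$ whenever $\util_a(x)\geq\eta$, computed by a standard maximum-matching algorithm. The only cosmetic differences are that you spell out why bundles cannot be empty (since $\eta>0$) and mention Hall's theorem as an alternative feasibility certificate, neither of which changes the argument.
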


Contrastingly, we show that when \conflict graph is edgeless, the problem is computationally hard even when bundles are size at most three, \Cref{thm:nph-edgeless}.

This result leads us to asking if what happens if incompatibilities are highly localized:  

\begin{tcolorbox}[boxsep=5pt,left=5pt,top=5pt,colback=green!5!white,colframe=gray!75!black]
Does \cffa admit a polynomial time  algorithm when $\Co{H}$ is a disjoint union of cliques? 
\end{tcolorbox}

We answer this question negatively by proving the following result, which is due to a reduction from \NumMatch. 


\begin{theorem}
\label{thm:hardnessForcluster}
\cffa is \npc even when \Co{H} is a cluster graph comprising of $3$ cliques.
\end{theorem}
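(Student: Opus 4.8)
The plan is to reduce from \NumMatch (N3DM), which is \npc in the strong sense. Recall that an instance of N3DM consists of three disjoint sets $W$, $X$, $Y$, each of size $m$, a size function $s\colon W\cup X\cup Y\to\mathbb{Z}_{>0}$, and a target $B\in\mathbb{Z}_{>0}$; the question is whether $W\cup X\cup Y$ can be partitioned into $m$ triples, each containing exactly one element of $W$, one of $X$, and one of $Y$, with every triple summing to exactly $B$. Membership of \cffa in {\sf NP} is immediate, since a candidate allocation $\phi$ is verifiable in polynomial time.

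The reduction itself is the heart of the argument. Given an N3DM instance, I would first observe that any yes-instance satisfies $\sum_{a}s(a)=mB$, so if this equality fails I output a fixed trivial no-instance of \cffa; hence I may assume $\sum_{a}s(a)=mB$. I then build a \cffa instance whose \items are $\Co{I}=W\cup X\cup Y$ and whose conflict graph $\Co{H}$ is the disjoint union of three cliques, one spanning $W$, one spanning $X$, and one spanning $Y$. I take $m$ agents, give every agent the \emph{same} utility function $\util_a\equiv s$, and set the threshold $\eta=B$. Because $\Co{H}$ is a union of three cliques, every independent set---hence every bundle---contains at most one vertex from each of $W$, $X$, $Y$, so bundles automatically have size at most three.

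For the forward direction, a valid matching $A_1,\dots,A_m$ yields the allocation $\phi(i)=A_i$: each $A_i$ picks one vertex per clique and is therefore independent in $\Co{H}$, the bundles are pairwise disjoint since the $A_i$ partition the ground set, and $\sum_{x\in A_i}s(x)=B=\eta$. The backward direction is where the counting argument does the work. Suppose $\phi$ is a valid allocation. Summing the threshold constraint over the $m$ agents gives total assigned utility at least $mB$; on the other hand the total utility available is $\sum_a s(a)=mB$, so every inequality must be tight: all $3m$ \items are assigned and each agent receives utility exactly $B$. Since each bundle is independent in a three-clique cluster graph it has at most three \items, and as $\sum_a|\phi(a)|=3m$ is distributed over $m$ agents, every bundle has exactly three \items---necessarily one from each clique. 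Thus $\phi$ reconstructs a partition into $m$ triples, each with one element per set and summing to $B$, i.e.\ a valid N3DM matching.

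The main obstacle to watch is the tightness bookkeeping in the backward direction, and in particular the case $\sum_a s(a)\neq mB$: without the preprocessing step that rejects such instances, an agent could exceed the threshold by grabbing a few large \items while leaving others unassigned, decoupling the \cffa solution from any N3DM matching. Guaranteeing $\sum_a s(a)=mB$ is exactly what forces the chain of equalities (all \items used, every bundle of size three, one per clique) that makes the equivalence exact.
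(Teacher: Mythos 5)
Your proposal is correct and follows essentially the same route as the paper: a reduction from \NumMatch with the three cliques placed on the three element sets, uniform utilities $\util_a \equiv s$, threshold $\eta = B$, and the same tightness/counting argument in the backward direction (total utility $mB$ forces every bundle to have value exactly $B$, hence exactly three \items, one per clique). Your explicit preprocessing step rejecting instances with $\sum_a s(a)\neq mB$ is a minor point of extra care that the paper handles implicitly via the problem definition; otherwise the two proofs coincide.
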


Since, an edgeless graph is also a cluster graph, due to \cite{deuermeyer1982scheduling}, we have the following. 
\begin{proposition}\label{obs:cluster-2agent}
\cffa is \npc even for $2$ agents when \Co{H} is a cluster graph.
\end{proposition}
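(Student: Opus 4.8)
The plan is to prove Proposition~\ref{obs:cluster-2agent} by directly invoking the hardness result of Deuermeyer et al.~\cite{deuermeyer1982scheduling} as stated earlier in the excerpt, together with the structural observation that an edgeless graph is a (degenerate) cluster graph. Recall that the excerpt tells us that Deuermeyer et al. showed \cffa is \npc even for $2$ agents and even when \Co{H} is edgeless. The key point I would make explicit is that the family of edgeless graphs is contained in the family of cluster graphs: a cluster graph is a disjoint union of cliques, and an edgeless graph on \Co{I} is precisely the disjoint union of $|\Co{I}|$ cliques of size one. Hence any edgeless instance is, verbatim, a cluster-graph instance.

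The argument is therefore a containment-of-instances argument rather than a fresh reduction. First I would fix the hardness instance guaranteed by~\cite{deuermeyer1982scheduling}: a set of agents \Co{A} with $|\Co{A}|=2$, a set of \items \Co{I}, utility functions $\util_a$ for $a\in\Co{A}$, a threshold $\eta$, and an edgeless conflict graph \Co{H} on vertex set \Co{I}, for which deciding \cffa is \nph. Second, I would observe that this very instance is also a valid input to \cffa in which \Co{H} is a cluster graph, since \Co{H} being edgeless means every connected component is a single vertex, i.e.\ a clique of size one, so \Co{H} is trivially a disjoint union of cliques. No modification of the agents, \items, utilities, or threshold is needed, and the yes/no answer is unchanged because the instance is literally the same. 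Third, membership in \NP is immediate (indeed it is inherited from the general problem and also follows from the guessed assignment $\phi$ being verifiable in polynomial time), so the problem is \npc and not merely \nph.

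Since there is no construction to verify, there is essentially no nontrivial obstacle here; the only thing to be careful about is the definitional point that an edgeless graph genuinely qualifies as a cluster graph under the paper's definition of a cluster graph as a disjoint union of cliques (one must allow cliques of size one). If one preferred to avoid relying on this degenerate reading, the alternative is to note that Proposition~\ref{obs:cluster-2agent} also follows from \Cref{thm:hardnessForcluster}, which already establishes \npc-ness for cluster graphs with three cliques; however, that result does not bound the number of agents to two, so it does not directly give the ``$2$ agents'' strengthening. Thus the cleanest route is the edgeless-as-cluster observation combined with~\cite{deuermeyer1982scheduling}, which simultaneously gives both the cluster-graph structure and the bound of two agents. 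I would conclude by stating that, putting these together, \cffa is \npc even when restricted to two agents and a cluster conflict graph.
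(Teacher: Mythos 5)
Your proposal is correct and matches the paper's own argument exactly: the paper derives this proposition by noting that an edgeless graph is a cluster graph and invoking the Deuermeyer et al.~\cite{deuermeyer1982scheduling} hardness result for two agents on edgeless conflict graphs. Your added care about size-one cliques and \NP{} membership is fine but does not change the substance.
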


Next, we design a polynomial-time algorithm when a cluster graph contains $2$ cliques and the utility functions are \emph{uniform}, i.e., utility functions are the same for all the agents. In particular, we prove the following result.

\begin{theorem}\label{thm:2cliques}
There exists an algorithm that solves \cffa in polynomial time when the \conflict graph is a cluster graph comprising of $2$ cliques and the utility functions are uniform.
\end{theorem}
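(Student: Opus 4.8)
The plan is to exploit the uniformity of the utilities to reduce the assignment question to a pure counting question, and then to answer that counting question with bipartite matching. First I would observe that since $\Co{H}$ is a disjoint union of two cliques, its vertex set partitions as $\Co{I} = \Co{I}_1 \uplus \Co{I}_2$, and every independent set of $\Co{H}$—hence every bundle—contains at most one item from each clique. Thus each admissible bundle has size at most two, with at most one item from $\Co{I}_1$ and at most one from $\Co{I}_2$. Because the utility function $\util$ is the same for all agents, a bundle of total utility at least $\eta$ is feasible for one agent if and only if it is feasible for every agent. Consequently a satisfying assignment $\phi$ exists precisely when one can pack $n$ pairwise disjoint bundles each of utility at least $\eta$, so it suffices to compute the maximum number $B^\ast$ of such pairwise disjoint bundles and to test whether $B^\ast \geq n$. (The degenerate case $\eta = 0$, in which the empty bundle already satisfies everyone, is handled separately and trivially.)

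Next I would classify the items: call $x$ \emph{heavy} if $\util(x) \geq \eta$ and \emph{light} otherwise, and write $h_1, h_2$ for the numbers of heavy items in $\Co{I}_1, \Co{I}_2$, with $L_1, L_2$ the corresponding sets of light items. A feasible bundle is then either a single heavy item, or a pair $(x,y)$ with $x \in \Co{I}_1$, $y \in \Co{I}_2$ and $\util(x)+\util(y) \geq \eta$. The core of the proof is the identity $B^\ast = h_1 + h_2 + \mu$, where $\mu$ is the size of a maximum matching in the bipartite graph on $L_1 \uplus L_2$ that joins $x \in L_1$ to $y \in L_2$ exactly when $\util(x)+\util(y)\geq\eta$. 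For the lower bound $B^\ast \geq h_1+h_2+\mu$, I would use all heavy items as solo bundles and realize a maximum light–light matching as pair bundles; these are pairwise disjoint and all feasible.

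For the upper bound I would take any optimal packing and view its pair bundles as a matching $M$ in the bipartite graph on $\Co{I}_1\uplus\Co{I}_2$ joining $x$ to $y$ when $\util(x)+\util(y)\ge\eta$. Every item not matched by $M$ that is heavy can be used as a solo bundle, while unmatched light items are necessarily wasted, so the packing contains at most $|M| + (h_1 + h_2 - m_H)$ bundles, where $m_H$ (resp.\ $m_L$) is the number of heavy (resp.\ light) items covered by $M$. Using $2|M| = m_H + m_L$ this equals $h_1 + h_2 + \tfrac{m_L - m_H}{2}$. Classifying the edges of $M$ into light–light, light–heavy, and heavy–heavy types, with counts $p_{LL}, p_{LH}, p_{HH}$, gives $m_L = 2p_{LL}+p_{LH}$ and $m_H = 2p_{HH}+p_{LH}$, whence $\tfrac{m_L-m_H}{2} = p_{LL} - p_{HH} \le p_{LL} \le \mu$. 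This yields $B^\ast \le h_1+h_2+\mu$ and completes the formula. The algorithm is then immediate: split each clique into heavy and light items by comparing $\util$ against $\eta$, build the bipartite light–light graph, compute $\mu$ by a standard bipartite matching routine (e.g.\ Hopcroft–Karp) in polynomial time, and return \yes exactly when $h_1 + h_2 + \mu \geq n$.

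I expect the main obstacle to be the upper bound $B^\ast \le h_1 + h_2 + \mu$: one must argue cleanly that restricting paired bundles to light–light pairs loses nothing, i.e.\ that a heavy item is never profitably buried inside a pair rather than used as its own solo bundle. The accounting identity ``number of bundles $= h_1 + h_2 + \tfrac{m_L - m_H}{2}$'' is exactly what makes this rigorous, but setting it up requires care to verify that every heavy item left uncovered by $M$ is genuinely available as a solo bundle and that light–heavy pairings are strictly neutral (they contribute $0$ to $p_{LL}-p_{HH}$). I would also note explicitly where uniformity is used: without it a bundle feasible for one agent need not be feasible for another, the ``which agent gets which bundle'' question reappears, and the reduction to a single matching computation breaks down—consistent with the general cluster case being hard by \Cref{thm:hardnessForcluster}.
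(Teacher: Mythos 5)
Your proposal is correct and is essentially the paper's algorithm: both classify \items as heavy ($\util(x)\geq\eta$, used as singleton bundles) or light, build the bipartite graph of feasible cross-clique light--light pairs, compute a maximum matching $\mu$ there, and accept exactly when $h_1+h_2+\mu\geq n$, with uniformity guaranteeing that any feasible bundle can go to any agent. The only difference is in how the upper bound (``heavy items are never profitably buried in pairs'') is justified---the paper uses an exchange argument that rewrites an arbitrary solution into the canonical heavy-solo form, while you use the edge-classification counting identity $B^\ast \le h_1+h_2+p_{LL}-p_{HH}$---and both arguments are sound.
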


Proofs of \Cref{thm:hardnessForcluster,thm:2cliques} are in {\bf \Cref{ss:cluster-proof}}. In light of Theorem~\ref{thm:poly-clique}, the {\em distance of a graph}  from a complete graph is a natural parameter to study in parameterized complexity.  
The distance function can be defined in several ways. We define it as follows:  the number of edges, say $t$,  whose addition makes the graph a complete graph.  
We first show a result that gives a {\em subexponential time algorithm}  when the number of agents is constant.
\begin{theorem}\label{thm:fpt-t+n}
There exists an algorithm that solves  \cffa in $\Oh((2t \cdot 2^{2\sqrt{t}}+1)^n(n+m)^{\Oh(1)})$ time, where $t=\binom{m}{2}-|E(\Co{H})|$ denotes the number of edges when added to \Co{H} yields a complete graph.
\end{theorem}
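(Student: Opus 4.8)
The plan is to pass to the complement of the conflict graph, where the hypothesis ``close to complete'' turns into ``few edges''. Write $\overline{\mathcal H}$ for the complement of $\Co{H}$ on vertex set $\Co{I}$; it has exactly $t=\binom{m}{2}-|E(\Co{H})|$ edges, and a set is independent in $\Co{H}$ precisely when it is a clique of $\overline{\mathcal H}$. Thus every bundle $\phi(a)$ is a clique of $\overline{\mathcal H}$, and since $\eta\ge 1$ no bundle may be empty, so each agent receives either a single \object or a clique of $\overline{\mathcal H}$ of size at least two. Let $V_{\mathrm e}\subseteq\Co{I}$ be the set of endpoints of edges of $\overline{\mathcal H}$, so $|V_{\mathrm e}|\le 2t$. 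Every vertex outside $V_{\mathrm e}$ is adjacent in $\Co{H}$ to all other vertices, hence can only occur in a singleton bundle; consequently every bundle of size at least two is a clique of $\overline{\mathcal H}$ contained in $V_{\mathrm e}$.

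The first substantial step is a counting bound: the number $B$ of cliques of size at least two in $\overline{\mathcal H}$ satisfies $B\le 2t\cdot 2^{2\sqrt t}$. Every clique $K$ obeys $\binom{|K|}{2}\le t$, whence $|K|\le\tfrac12\bigl(1+\sqrt{1+8t}\bigr)\le 2\sqrt t$; in particular $\overline{\mathcal H}$ has clique number at most $2\sqrt t$. I would then charge each clique to its smallest-index vertex $v\in V_{\mathrm e}$: a clique with minimum vertex $v$ consists of $v$ together with a clique of $\overline{\mathcal H}[N^{+}(v)]$, where $N^{+}(v)$ denotes the higher-indexed neighbours of $v$. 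The induced subgraph $\overline{\mathcal H}[N^{+}(v)]$ again has at most $t$ edges, and among all graphs with at most $t$ edges the total number of cliques is maximised by a single clique on at most $2\sqrt t$ vertices -- a Kruskal--Katona / compression fact -- so each $N^{+}(v)$ carries at most $2^{2\sqrt t}$ cliques. Summing over the at most $2t$ choices of $v$ gives $B\le 2t\cdot 2^{2\sqrt t}$.

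With these $B$ cliques enumerated, I would brute-force the genuinely conflict-sensitive part of an allocation and dispatch the rest by matching. For each agent, guess either one of the $B$ cliques or a special symbol meaning ``this agent takes a singleton'', for a total of $(B+1)^n\le(2t\cdot 2^{2\sqrt t}+1)^n$ guesses. Given a guess, first verify that the cliques handed to distinct agents are pairwise disjoint and that each assigned clique $\phi(a)$ meets $\sum_{x\in\phi(a)}\util_a(x)\ge\eta$; if so, delete all used \items. The agents marked as singleton must now receive pairwise distinct leftover \items, each worth at least $\eta$ to its owner, which is precisely a bipartite matching problem: join such an agent $a$ to a remaining \object $x$ whenever $\util_a(x)\ge\eta$ and test for a matching saturating the singleton agents, solvable in polynomial time. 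The instance is a \yes-instance if and only if some guess passes both tests, because any valid allocation induces such a guess (its bundles of size at least two are among the enumerated cliques, and its singletons form a saturating matching), and conversely every passing guess reassembles into a valid allocation.

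The total running time is $(B+1)^n$ times the polynomial cost of the disjointness and utility checks plus one bipartite matching per guess, i.e.\ $\Oh\bigl((2t\cdot 2^{2\sqrt t}+1)^n(n+m)^{\Oh(1)}\bigr)$, as required. I expect the counting bound $B\le 2t\cdot 2^{2\sqrt t}$ to be the crux: the naive estimate that every such clique lives on $\le 2t$ vertices and has size $\le 2\sqrt t$ only yields $\binom{2t}{\le 2\sqrt t}$, which is far too large, so the proof must exploit the edge budget $t$ itself -- rather than the vertex count -- to control the number of cliques, and this is exactly where the fact that a single clique maximises the clique count does the work. The complement reduction, the universal-vertex observation, and the matching subroutine are all routine.
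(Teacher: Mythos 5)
Your proposal is correct, and its algorithmic skeleton is exactly the paper's: pass to the complement $\overline{\Co{H}}$ with $t$ edges, note that only the at most $2t$ non-isolated vertices of $\overline{\Co{H}}$ can appear in bundles of size at least two, bound the number of such non-trivial cliques by $\Oh(2t\cdot 2^{2\sqrt t})$, guess for each agent either one of these cliques or a ``singleton'' marker (the paper's symbol $\gamma$), verify disjointness and utilities, and finish the singleton agents by bipartite matching (the paper invokes its complete-graph algorithm, Theorem~\ref{thm:poly-clique}, which is the same matching step). Where you genuinely diverge is the proof of the counting bound, which you correctly identify as the crux. The paper shows that $\overline{\Co{H}}$ restricted to its non-isolated vertices is $2\sqrt t$-degenerate (a subgraph of minimum degree $2\sqrt t$ would already force more than $t$ edges), and charges each clique to its first vertex in a \emph{degeneracy} order: the forward neighborhood then has size at most $2\sqrt t$, so brute-forcing its subsets gives at most $2^{2\sqrt t}$ cliques per vertex \emph{and} an explicit enumeration procedure in the same time. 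You charge to the smallest vertex in an \emph{arbitrary} order and control the cliques inside $N^{+}(v)$ by a Kruskal--Katona-type bound for graphs with at most $t$ edges.

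Your route works, but two points need patching. First, the compression fact is false as literally stated for the ``total number of cliques'': isolated vertices contribute singleton cliques for free, so no single clique maximises the total count. The correct statement (a graph with at most $\binom{x}{2}$ edges has at most $\binom{x}{r}$ cliques of size $r$, hence at most $2^{x}$ cliques of size at least two, with $x\le 2\sqrt t$) controls only cliques of size at least two; the singletons of $\overline{\Co{H}}[N^{+}(v)]$ must be counted separately, but since $\sum_{v}|N^{+}(v)|\le 2t$ this adds only $\Oh(t)$ overall and the bound survives inside the theorem's $\Oh(\cdot)$. Second, and more substantively, your argument is purely a counting statement, while the algorithm must actually \emph{enumerate} the cliques within the time budget; under an arbitrary order $|N^{+}(v)|$ can be as large as $t$, so ``try all subsets of $N^{+}(v)$'' is too slow. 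This is easily repaired by enumerating cliques with the standard backtracking tree (extend the current clique by higher-indexed common neighbours), which has one node per clique and polynomial work per node, so your counting bound caps the total running time. The paper's degeneracy ordering is the more economical choice precisely because it makes the bound and the enumeration the same argument; your version buys a reusable extremal fact but must add the polynomial-delay enumeration step to be complete.
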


\Cref{thm:fpt-t+n}  is obtained by showing that if a graph $G$ can be made into a clique by adding 
at most $t$ edges then the number of independent sets of size at least $2$ is upper bounded by $\Oh(2t \cdot 2^{2\sqrt{t}})$.  However, it is not an \fpt algorithm {\em parameterized by $t$ alone}. To show that the problem is \fpt parameterized by $t$, we obtain the following result. 

\begin{theorem}\label{thm:fpt-t}
There exists an algorithm that solves \cffa in $\Oh((2t)^{t+1}(n+m)^{\Oh(1)})$ time. 
\end{theorem}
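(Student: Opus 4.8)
The plan is to solve the decision version directly: the input already fixes a threshold $\eta$, so it suffices to decide whether there is an assignment $\phi$ in which every agent receives utility at least $\eta$. Since $\eta\ge 1$, every agent must receive at least one \object, so in any solution each agent is assigned either a \emph{large bundle} (an independent set of \Co{H} of size at least $2$) or a \emph{singleton} (exactly one \object). First I would record the structural consequences of \Co{H} being $t$ edges short of complete. Let $W\subseteq\I$ be the set of \objects incident to at least one non-edge of \Co{H}; since there are exactly $t$ non-edges, $|W|\le 2t$. Every \object outside $W$ is adjacent in \Co{H} to all other \objects, hence can only appear as a singleton. Conversely, any two \objects of a large bundle form a non-edge, so every large bundle is contained in $W$. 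Finally, distinct large bundles are vertex-disjoint and each contains at least one non-edge, so there can be at most $t$ large bundles in any solution.

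These observations suggest the following algorithm. I would enumerate every \emph{configuration}, by which I mean an unordered family $\mathcal B=\{B_1,\dots,B_\ell\}$ of pairwise-disjoint independent sets of \Co{H}, each of size at least $2$ and contained in $W$, with $\ell\le t$; equivalently, a partition of some subset $U=\bigcup_i B_i$ of $W$ into blocks of size at least $2$. Each configuration is a guess for the collection of large bundles used by a solution. Having fixed $\mathcal B$, the remaining task is purely a bipartite assignment problem: I must attach each block to a distinct agent that values it at least $\eta$, and give every remaining agent a distinct singleton drawn from $\I\setminus U$ that it values at least $\eta$. I would model this by a bipartite graph with the agents on one side and the objects $\{B_1,\dots,B_\ell\}\cup(\I\setminus U)$ on the other, placing an edge between agent $a$ and a block $B_i$ (resp.\ a singleton $x$) precisely when $\sum_{y\in B_i}\util_a(y)\ge\eta$ (resp.\ $\util_a(x)\ge\eta$), and then test for a matching saturating all agents. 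A saturating matching yields a feasible $\phi$ (unused blocks simply leave their \objects unassigned), and conversely the large bundles of any feasible $\phi$ form one of the enumerated configurations whose associated bipartite graph admits a saturating matching. Each matching test costs $(n+m)^{\Oh(1)}$ time.

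It remains to bound the number of configurations, which is the crux of the argument and the step I expect to require the most care. A configuration is a partition of a subset of the at most $2t$ vertices of $W$ into blocks of size at least $2$; in particular it has at most $t$ blocks. A direct counting argument---exploiting that the size-at-least-$2$ constraint caps the number of blocks at $t$ while the ground set has size at most $2t$---bounds the number of such partitions by $(2t)^{t+1}$. One convenient route is to verify, by induction on $|W|$, that the number of partitions of subsets of a $w$-element set into blocks of size at least $2$ is at most $(w+1)^{\lfloor w/2\rfloor}$, and then use $(2t+1)^{t}\le(2t)^{t+1}$. Multiplying this count by the $(n+m)^{\Oh(1)}$ cost of each matching test gives the claimed running time of $\Oh\big((2t)^{t+1}(n+m)^{\Oh(1)}\big)$. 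The main obstacle is thus not the algorithmic reduction, which is a clean guess-then-match scheme, but obtaining the tight $(2t)^{t+1}$ bound on the number of large-bundle configurations rather than the naive $(t+1)^{2t}$ one gets by labelling each vertex of $W$ by its block.
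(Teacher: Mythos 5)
Your algorithmic scheme coincides with the paper's proof: the paper likewise confines large bundles to the at most $2t$ vertices incident to non-edges, guesses the partition of these vertices into a leftover set plus independent blocks of size at least $2$, contracts each block to a super-vertex with additive utilities, and solves the resulting complete-conflict-graph instance via \Cref{thm:poly-clique} --- which is exactly your agents-versus-(blocks and singletons) matching test. That part, and its correctness, are fine. The genuine gap sits precisely where you said the crux is: your counting lemma is false. Partitions of subsets of a $w$-element set into blocks of size at least $2$ are in bijection with \emph{all} partitions of the $w$-set (declare every uncovered element a singleton block), so their number is exactly the Bell number $B_w$. Already for $w=5$ there are $52$ of them ($51$ excluding the empty configuration), exceeding your claimed bound $(w+1)^{\lfloor w/2\rfloor}=36$, so the proposed induction cannot go through. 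Moreover the failure is not fixable by tuning the exponent: $\ln B_{2t}=(1-o(1))\,2t\ln(2t)$, which is asymptotically twice $\ln\bigl((2t)^{t+1}\bigr)$, and concretely $B_{30}\approx 8.5\cdot 10^{23} > 30^{16}\approx 4.3\cdot 10^{23}$ (the case $t=15$). Hence \emph{no} counting argument that ignores the graph $\Co{H}$, as yours does, can bound the number of configurations by $(2t)^{t+1}$.

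The missing idea is to exploit the constraint you impose on blocks but never use in the count: each block is an independent set of $\Co{H}$, i.e., a clique of size at least $2$ in $\overline{\Co{H}}$, and $\overline{\Co{H}}$ has only $t$ edges. Choose a spanning tree inside each block of a configuration; the union of these trees is a forest $F\subseteq E(\overline{\Co{H}})$ whose components with at least two vertices are exactly the blocks, so distinct configurations correspond to disjoint nonempty families of forests. Therefore the number of configurations is at most the number of forests in $\overline{\Co{H}}$, which is at most $2^{t}\le (2t)^{t+1}$, and they can be enumerated in $2^{t}(n+m)^{\Oh(1)}$ time by iterating over all edge subsets of $\overline{\Co{H}}$ and keeping those that are forests whose non-trivial components induce cliques. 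Substituting this for your counting step completes your proof (and in fact yields a sharper bound than the stated $(2t)^{t+1}$). For comparison, the paper's own proof also merely asserts the $(2t)^{t+1}$ count, but it asserts it for the independence-constrained partitions, where the assertion is true; the independence requirement is exactly what rescues the bound.
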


In light of \Cref{thm:poly-clique}, we know that \cffa is polynomial-time solvable when every vertex has degree $m-1$. Next, we show that the problem is also polynomial-time solvable when every vertex has degree $m-2$ and the utility functions are uniform.

\begin{theorem}\label{thm:polytime-n-2}
There exists an algorithm that solves \cffa in polynomial time when every vertex in the \conflict graph has degree $m-2$ and the utility functions are uniform.
\end{theorem}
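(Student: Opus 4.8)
\noindent\textbf{Proof plan for \Cref{thm:polytime-n-2}.}
The starting point is to read off the structure forced by the degree condition. In a simple graph on $m$ vertices every vertex has degree at most $m-1$, so having degree exactly $m-2$ means each vertex has \emph{exactly one} non-neighbour. Passing to the complement of $\Co{H}$, every vertex then has degree exactly one, i.e.\ the complement is a perfect matching $M=\{e_1,\dots,e_{m/2}\}$ with $e_i=\{p_i,q_i\}$ (in particular $m$ must be even, otherwise no such graph exists and the instance is vacuous). The independent sets of $\Co{H}$ are precisely the cliques of its complement, and the cliques of a perfect matching are exactly the singletons and the matching edges. Hence every bundle $\phi(a)$ has size at most $2$, and a size-$2$ bundle must be one of the pairs $e_i$. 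So the first step I would carry out is this structural reduction: the only admissible bundles are singletons and the $m/2$ matched pairs.

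Next I would exploit uniformity. Since $\util_a=\util$ for all $a$, the agents are interchangeable, so \cffa with threshold $\eta$ reduces to a pure packing question: \emph{is it possible to pack at least $n$ pairwise disjoint bundles, each of $\util$-value $\ge\eta$?} Call such a bundle \emph{valid}. Because $M$ partitions $\Co{I}$ and distinct matching edges are vertex-disjoint, the packing decomposes completely across the pairs $e_i$. For each pair I would define its \emph{capacity}
\[
c_i=
\begin{cases}
2 & \text{if } \util(p_i)\ge\eta \text{ and } \util(q_i)\ge\eta,\\
1 & \text{if not both hold but } \util(p_i)+\util(q_i)\ge\eta,\\
0 & \text{otherwise,}
\end{cases}
\]
using that (as utilities are non-negative) $\util(p_i)\ge\eta$ or $\util(q_i)\ge\eta$ already implies $\util(p_i)+\util(q_i)\ge\eta$. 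The claim I would then establish is that the maximum number of pairwise disjoint valid bundles equals $\sum_i c_i$, so that the instance is a \yes-instance if and only if $\sum_i c_i\ge n$, which is clearly checkable in polynomial time.

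The crux, and the one place needing a genuine (if short) argument, is the correctness of this per-pair decomposition. For the upper bound I would argue by localisation: given any family of disjoint valid bundles, those intersecting the items $\{p_i,q_i\}$ use disjoint nonempty subsets of a two-element set; there can be at most two such bundles, and two occur only when both are singletons, which forces both $\util(p_i)\ge\eta$ and $\util(q_i)\ge\eta$, i.e.\ $c_i=2$; a single bundle supported on $\{p_i,q_i\}$ is a valid subset and forces $c_i\ge 1$. Summing over $i$ bounds the family size by $\sum_i c_i$, and in particular shows that taking a matched pair as a $2$-bundle is never better than splitting it into two valid singletons. For the matching lower bound I would simply realise $c_i$ valid disjoint bundles inside each $e_i$ (two singletons when $c_i=2$; one singleton or the pair $e_i$ when $c_i=1$) and take their union over all $i$, which is disjoint because the $e_i$ are. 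I expect no serious obstacle beyond this exchange/localisation step; the whole algorithm is to build the matching, compute each $c_i$ in constant time, and compare $\sum_i c_i$ with $n$, which runs in $(n+m)^{\Oh(1)}$ time. It is worth noting that uniformity is what permits the reduction to counting: with agent-specific utilities the same structure would instead yield a bipartite matching between agents and bundles rather than a simple sum.
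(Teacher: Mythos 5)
Your proof is correct and reaches exactly the same decision criterion as the paper, but via a more self-contained final step. The paper's proof of \Cref{thm:polytime-n-2} is a two-line reduction: it reuses the algorithm of \Cref{thm:2cliques} verbatim---split off the items of utility at least $\eta$ as singleton bundles, build the auxiliary graph of valid low-utility pairs, and compute a maximum matching in it---adding only that the set ${\sf IS}$ of independent $2$-sets is now small (the paper writes ``bounded by $\nicefrac{n}{2}$'', which should read $\nicefrac{m}{2}$: these $2$-sets are precisely the edges of the perfect matching forming the complement of $\Co{H}$, a fact your write-up makes explicit). Where you diverge is in the last combinatorial step: instead of invoking a maximum-matching computation, you observe that every admissible bundle lies inside a single complement edge $e_i$, so the packing problem decomposes pair by pair and the optimum equals $\sum_i c_i$. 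The two criteria coincide, since $\sum_i c_i = |\{x : \util(x)\ge\eta\}| + \#\{\text{pairs with both endpoints of low utility and } \util(p_i)+\util(q_i)\ge\eta\}$, and in this degree-$(m-2)$ setting the paper's auxiliary pairs graph is itself a matching, so its maximum matching is just its number of edges---i.e., your counting argument is exactly why the matching step degenerates. What the paper's route buys is brevity and uniformity with the two-clique case, where a genuine matching computation is unavoidable because valid pairs there can share endpoints; what your route buys is an elementary, fully explicit correctness proof (the localisation/exchange argument replaces the exchange argument buried in the proof of \Cref{thm:2cliques}), a corrected bound on $|{\sf IS}|$, and a clean isolation of where uniformity is used---as you rightly note, with agent-specific utilities the same structure would instead call for a bipartite matching between agents and bundles rather than a count.
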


Proofs of \Cref{thm:fpt-t+n,thm:fpt-t,thm:polytime-n-2} are in {\bf \Cref{ss:distance-from-completion}}.Table~\ref{tab:results} summarises all our results. 


\definecolor{olivegreen}{rgb}{.8,1,.8}
\newcommand{\tick}{{\color{olivegreen}{\bf \checkmark}}}
\newcommand{\cross}{{\color{red}{\ensuremath{\times}}}}

\definecolor{LightCyan}{rgb}{0.88,.8,.9}

\definecolor{lightred}{rgb}{1,.8,.8}

\newcolumntype{a}{>{\columncolor{LightCyan}}c}
\newcolumntype{b}{>{\columncolor{LightCyan}}c}
\newcolumntype{d}{>{\columncolor{LightCyan}}c}
\newcolumntype{e}{>{\columncolor{LightCyan}}c}

\newcommand{\Cy}[1]{{\textcolor{LightCyan}{#1}}}

\newcommand{\dash}{\cellcolor{yellow}}

\begin{table*}[htp]
\begin{center}
\begin{tabular}{| c  |c| b |c| d | d |c|}
\hline
Utility Functions & \multicolumn{3}{c|}{Arbitrary}& \multicolumn{2}{c|}{Uniform} & Arbitrary \\
\hline
\diagbox[innerwidth=3.8cm]{Parameters\vspace{0em}}{\Co{G}} & \multirow{2}{*}{Arbitrary} &  \multirow{2}{*}{Complete} &  \multirow{2}{*}{Cluster} & \multirow{1}{*}{Cluster}& \multirow{1}{*}{ Regular}&\multirow{1}{*}{ \mwisg is } \\ 
 & & & & 2 cliques & (degree \small{$\!m- 2$}) & \fpt wrt $k$ \\
\hline
$n=${\sf \#agents} & \cellcolor{lightred} \cite{deuermeyer1982scheduling} & {\bf Thm.}~\ref{thm:poly-clique} & \cellcolor{lightred}{\bf Obs.}~\ref{obs:cluster-2agent}& {\bf Thm.}~\ref{thm:2cliques} &{\bf Thm.}~\ref{thm:polytime-n-2}& \cellcolor{lightred}\cite{deuermeyer1982scheduling}\\
$s=$ {\sf bundleSize} & \cellcolor{lightred}~{\bf Thm.}\,\ref{thm:nph-edgeless} & & \cellcolor{lightred}~ {\bf Thm.}\,\ref{thm:hardnessForcluster} & & &\cellcolor{lightred}{\bf Thm.} \ref{thm:nph-edgeless}\\
$\eta$ & \cellcolor{lightred} {\bf Obs.}~\ref{obs:wh-eta} & & {\bf ?} & & &{\bf ?}\\
${\sf \#agents} + {\sf bundleSize}$& \cellcolor{lightred} {\bf Obs.}~\ref{obs:wh-eta} & & \cellcolor{olivegreen}~\bf{Thm.}\ref{thm:equiavlence} & & & \cellcolor{olivegreen}~{\bf Thm.}\,\ref{thm:equiavlence} \\
${\sf \# agents} + {\sf bundleSize}+ \eta$ & \cellcolor{lightred} {\bf Obs.}~\ref{obs:wh-eta} & & \dash & & & \dash \\
$m={\sf \#\items}$ & \cellcolor{olivegreen} {\bf Thm.}\,\ref{thm:fpt-with-items} & & \dash & & &\dash \\
 $t= {m\choose 2} - |E(\Co{H})|$ & \cellcolor{olivegreen}~{\bf Thm.}\,\ref{thm:fpt-t} &  & \dash & & & \dash \\
\hline
\end{tabular}
\end{center}
\caption{Summary of our results of \cffa, where the conflict graph belong to the family $\Co{G}$. \Cy{Lavender} cells denote polynomial time complexity; {\color{olivegreen} green} cells and {\color{lightred} pink} cells denote that the 
problem is \fpt and \w-hard w.r.t. the parameter in col 1, respectively; white cells with {\bf ?} mark denote that the complexity is open; and {\color{yellow} yellow} cells denote that the respective parameters are not interesting as the problem is either \fpt w.r.t. smaller parameter or for more general graph class.}
\label{tab:results}
\end{table*}%

\section{\cffa: A single exponential \fpt parameterized by ${\sf \#\items}$}\label{sec:fpt-input}

%
\newcommand{\polyn}[2]{\ensuremath{p^{#1}_{#2}}\xspace}

In this section, we will prove that \cffa is \fpt when parameterized by the number of \items, $m$.
The algorithm will use the technique of polynomial multiplication and fast Fourier transformation. The idea is as follows. For every agent $i\in \A$, we first construct a family of bundles that can be assigned to the agent $i$ in an optimal solution. Let us denote this family by $\Co{F}_i$. Then, our goal is to find $n$ disjoint bundles, one from each set $\Co{F}_i$. To find these disjoint sets efficiently, we  use the technique of polynomial multiplication. 

Before we discuss our algorithm, we have to introduce some notations and terminologies.  Let $\I$ be a set of size $m$, then we can associate \I with $[m]$.  The {\em characteristic vector} of a subset $S\subseteq [m]$, denoted by $\chi(S)$, is an $m$-length vector whose $i^{\text {th}}$ bit is $1$ if and only if  $i \in S$.  Two binary strings of length $m$ are said to be disjoint if for each $i\in [m]$, the $i^{th}$ bits in the two strings are different.  The {\em Hamming weight} of a binary string $S$, denoted by $H(S)$, is defined to be the number of $1$s in the string $S$. A monomial $y^i$ is said to have Hamming weight $w$, if the degree $i$ when represented as a binary string has Hamming weight  
$w$. 
 
We begin with the following observation. 

\begin{obs}\label{obs:disjoint-binary-vectors}
Let $S_1$ and $S_2$ be two binary strings of same length. Let $S=S_1+S_2$. If $H(S)=H(S_1)+H(S_2)$, then $S_1$ and $S_2$ are disjoint binary vectors. 
\end{obs}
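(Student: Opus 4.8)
The plan is to analyze what it means for $H(S) = H(S_1) + H(S_2)$ when $S = S_1 + S_2$ is computed as an integer addition of the two binary strings viewed as binary numbers. The key observation driving the proof is that carries in binary addition can only \emph{decrease} the total Hamming weight of the result relative to the sum of the input Hamming weights, and that a carry is generated precisely when the two strings share a $1$ in some bit position. First I would set up notation carefully: write $S_1$ and $S_2$ as length-$m$ binary strings with bits $(S_1)_i$ and $(S_2)_i$, and interpret $S = S_1 + S_2$ as the binary representation of the integer sum. The quantity $H(S_1) + H(S_2)$ counts the total number of $1$-bits across both operands.

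The core of the argument is to track the effect of carries. I would argue that if $S_1$ and $S_2$ are \emph{not} disjoint, then there exists at least one position $i$ where $(S_1)_i = (S_2)_i = 1$, and at this position a carry is generated. When two $1$-bits are added in a column, they contribute a $0$ to that column of the output and a carry of $1$ into the next column; this replaces two units of Hamming weight in the inputs with at most one unit in the output (the carry, if it eventually survives). The cleanest way to formalize this is to use the identity relating Hamming weight to the binary carry structure: for integer addition, $H(S_1) + H(S_2) = H(S) + 2\cdot(\text{number of carry operations})$, which follows because each carry consumes two $1$s and produces one $1$ one position higher, a net loss of one from the bit count, but doubled when accounting properly via the standard carry-counting identity $H(a)+H(b) = H(a+b) + 2 H(a \mathbin{\&} b) \cdot (\ldots)$. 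Rather than chase the exact constant, the clean route is the contrapositive.

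Concretely, I would prove the contrapositive: suppose $S_1$ and $S_2$ are not disjoint, so some bit position has a $1$ in both. Then I claim $H(S) < H(S_1) + H(S_2)$ strictly, which contradicts the hypothesis $H(S) = H(S_1) + H(S_2)$. To see the strict inequality, note that whenever a column has two $1$s, the addition in that column outputs $0$ and carries; following the carry propagation, the total number of $1$s in $S$ is at most $H(S_1) + H(S_2)$ minus the number of such doubled columns (each doubled column destroys two $1$s and creates at most one surviving carry-$1$). Since there is at least one doubled column, the inequality is strict. Equivalently and most transparently: in general binary addition one always has $H(S_1 + S_2) \leq H(S_1) + H(S_2)$, with equality if and only if no carries occur, and no carries occur if and only if the strings are disjoint. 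This equality-condition characterization is exactly the statement, so the observation follows immediately from it.

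The main obstacle I anticipate is being precise about carry propagation rather than hand-waving, since a carry out of one column can cascade (a generated carry may itself combine with a $1$ in the next column to produce a further carry). The safe way to handle this is to avoid column-by-column bookkeeping entirely and instead invoke the well-known submultiplicativity of Hamming weight under addition, $H(a+b) \le H(a) + H(b)$ with equality exactly when $a \mathbin{\&} b = 0$ (no shared $1$-bits, i.e.\ disjointness); this can be proven by a short induction on the number of bits or cited as standard. Given that lemma, the proof of the observation is a one-line application of the equality case. I would therefore lead with establishing (or citing) this equality-characterization and then conclude.
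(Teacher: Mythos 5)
Your proposal is correct. For comparison: the paper offers no proof at all of this observation---it is stated as self-evident and immediately combined with the Cygan et al.\ proposition to get \Cref{cor:HW-disjointness}---so what you have done is supply the argument the authors left implicit, and your reading of ``$+$'' as integer addition of the strings viewed as binary numbers is the right one (it is exactly what happens to exponents under monomial multiplication, which is how the observation is used). Your core argument is sound: $H(S_1+S_2)\leq H(S_1)+H(S_2)$ always, equality holds iff no carry occurs, and a carry must occur as soon as some position carries a $1$ in both strings (the column sum there is at least $2$), which gives the contrapositive. One slip worth flagging: the identity you half-wrote with a factor of $2$ is wrong; the correct bookkeeping is that each carry reduces the bit count by exactly one, i.e.\ $H(S_1)+H(S_2)-H(S_1+S_2)$ equals the \emph{number of carries} (this is the Legendre/Kummer carry-counting identity). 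Since you explicitly declined to rely on the exact constant and instead routed the proof through the equality-iff-no-carries characterization (provable by induction from the least significant bit), this slip does not damage the proof; just delete the erroneous displayed identity in a final write-up, or replace it with the correct one.
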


The following is due to Cygan et. al~\cite{DBLP:journals/tcs/CyganP10}.

\begin{proposition}\label{prop:disjoint-set} Let $S=S_{1} \cup S_{2}$, where $S_1$ and $S_2$ are two disjoint subsets of $[m]$. Then, $\chi(S)=\chi(S_1)+\chi(S_2)$ and $H(\chi(S))=H(\chi(S_1))+H(\chi(S_2))=|S_1|+|S_2|$.
\end{proposition}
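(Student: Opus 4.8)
The plan is to argue both claims by a direct coordinate-by-coordinate comparison of the two sides, with the disjointness of $S_1$ and $S_2$ doing all the work. First I would fix an index $i \in [m]$ and inspect the $i^{\text{th}}$ coordinate of each vector. By the definition of the characteristic vector, the $i^{\text{th}}$ coordinate of $\chi(S_1)$ is $1$ precisely when $i \in S_1$, and similarly for $\chi(S_2)$. Because $S_1 \cap S_2 = \emptyset$, for every $i$ at most one of these two coordinates equals $1$; hence their coordinatewise sum is again a $0/1$ value, and it equals $1$ exactly when $i \in S_1 \cup S_2 = S$. This is precisely the defining condition for the $i^{\text{th}}$ coordinate of $\chi(S)$, which establishes $\chi(S) = \chi(S_1) + \chi(S_2)$.

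For the Hamming-weight identity I would simply observe that for any subset $T \subseteq [m]$ the quantity $H(\chi(T))$ counts the number of $1$s in $\chi(T)$, which by definition is $|T|$. Applying this to $T = S, S_1, S_2$ and using disjointness to write $|S| = |S_1 \cup S_2| = |S_1| + |S_2|$ yields $H(\chi(S)) = |S| = |S_1| + |S_2| = H(\chi(S_1)) + H(\chi(S_2))$, as required.

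The only point that warrants care is the interpretation of the ``$+$'' in $\chi(S_1) + \chi(S_2)$: if one views the characteristic vectors as binary encodings of integers, integer addition could in principle produce carries that would destroy both identities. The content of the disjointness hypothesis is exactly that no coordinate receives a $1$ from both summands, so no carry is ever generated and coordinatewise addition agrees with integer addition. I expect this carry-free observation to be the one substantive step; everything else is routine bookkeeping. Note, moreover, that this statement is the converse direction to Observation~\ref{obs:disjoint-binary-vectors}, and the two together are what justify reading off disjointness of bundles directly from the Hamming weight of a product monomial in the polynomial-multiplication algorithm.
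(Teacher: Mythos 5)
Your proof is correct. Note that the paper offers no proof of this proposition at all: it is stated as a known result imported from Cygan and Pilipczuk, so there is no in-paper argument to compare yours against. Your coordinatewise argument is the standard one, and you rightly flag the only subtle point, namely that since the vectors $\chi(\cdot)$ are used as integer exponents of monomials (so polynomial multiplication adds them as integers), disjointness is needed precisely to rule out carries and make the coordinatewise and integer readings of $\chi(S_1)+\chi(S_2)$ coincide. Your closing remark also matches the paper's usage: this proposition together with Observation~\ref{obs:disjoint-binary-vectors} is exactly what yields Corollary~\ref{cor:HW-disjointness}, which lets the algorithm read off disjointness of bundles from the Hamming weight of a product monomial.
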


Observation~\ref{obs:disjoint-binary-vectors} and Proposition~\ref{prop:disjoint-set} together yield the following. 

\begin{corollary}\label{cor:HW-disjointness}
Subsets $S_{1}, S_{2} \sse \I$ are disjoint if and only if Hamming weight of the monomial $x^{\chi(S_1)+\chi(S_2)}$ is $|S_{1}|+|S_{2}|$. 
\end{corollary}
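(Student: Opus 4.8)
The plan is to chain together Observation~\ref{obs:disjoint-binary-vectors} and Proposition~\ref{prop:disjoint-set}, which together already supply both directions of the desired equivalence once we translate between the set-theoretic and the polynomial-exponent formulations. The key bridge is the observation that for a subset $S \subseteq \I$, the Hamming weight of the monomial $x^{\chi(S)}$ is by definition $H(\chi(S))$, and that $H(\chi(S)) = |S|$ since the characteristic vector has a $1$ in exactly the positions corresponding to elements of $S$. Thus ``the Hamming weight of $x^{\chi(S_1)+\chi(S_2)}$ equals $|S_1|+|S_2|$'' is precisely the condition $H(\chi(S_1)+\chi(S_2)) = |S_1| + |S_2|$.

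For the forward direction, I would assume $S_1$ and $S_2$ are disjoint. Proposition~\ref{prop:disjoint-set} then directly gives $H(\chi(S_1)+\chi(S_2)) = |S_1| + |S_2|$, which is exactly the statement that the monomial $x^{\chi(S_1)+\chi(S_2)}$ has Hamming weight $|S_1|+|S_2|$. For the converse, I would set $S = S_1 + S_2$ as binary strings (i.e.\ take $S = \chi(S_1) + \chi(S_2)$) and assume $H(S) = |S_1| + |S_2|$. Since $H(\chi(S_1)) = |S_1|$ and $H(\chi(S_2)) = |S_2|$, the hypothesis reads $H(\chi(S_1)+\chi(S_2)) = H(\chi(S_1)) + H(\chi(S_2))$, which is exactly the premise of Observation~\ref{obs:disjoint-binary-vectors} applied to the strings $\chi(S_1)$ and $\chi(S_2)$. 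Hence those characteristic vectors are disjoint binary vectors, and therefore $S_1$ and $S_2$ are disjoint subsets.

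The one point that deserves care is the interpretation of the sum $\chi(S_1)+\chi(S_2)$: it must be read as ordinary integer (componentwise) addition of the vectors, not as addition modulo $2$, since otherwise the Hamming-weight accounting fails (overlapping coordinates would cancel rather than produce a carry). This is the same convention already fixed implicitly in Proposition~\ref{prop:disjoint-set}, so the statements are compatible, but I would make the convention explicit to avoid ambiguity. I expect this to be the only genuine subtlety; the rest is a mechanical substitution of the equality $H(\chi(S)) = |S|$ into the two cited results, so there is no real obstacle beyond bookkeeping. The corollary then follows immediately by combining the two directions.
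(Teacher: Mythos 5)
Your proof is correct and matches the paper's approach exactly: the paper derives Corollary~\ref{cor:HW-disjointness} by the same chaining of Observation~\ref{obs:disjoint-binary-vectors} (for the converse) with Proposition~\ref{prop:disjoint-set} (for the forward direction), via the identification $H(\chi(S)) = |S|$. Your explicit remark that the exponent sum $\chi(S_1)+\chi(S_2)$ is ordinary integer addition (as forced by polynomial multiplication) is a worthwhile clarification the paper leaves implicit, but it does not change the argument.
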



The {\em Hamming projection} of a polynomial $p(y)$ to $h$, denoted by $H_{h}(p(y))$, is the sum of all the monomials of $p(y)$ which have Hamming weight $h$. We define the {\em representative polynomial} of $p(y)$, denoted by $\Co{R}(p(y))$, as the sum of all the monomials that have non-zero coefficient in $p(y)$ but have coefficient $1$ in $\Co{R}(p(y))$, i.e., it ignores the actual coefficients and only remembers whether the coefficient is non-zero. We say that a polynomial $p(y)$ {\it contains a monomial} $y^i$ if the coefficient of $y^{i}$ is non-zero. The zero polynomial is the one in which the coefficient of each monomial is $0$.

Now, we are ready to discuss our algorithm.

\begin{theorem}\label{thm:fpt-with-items}
 \cffa is solvable in $\Oh(2^{m}(n+m)^{\Oh(1)})$ time, where $m={\sf \#\items}$ and $n={\sf \#agents}$.
\end{theorem}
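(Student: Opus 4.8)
The plan is to design an algebraic algorithm that encodes bundles as monomials over a single variable $y$, where the exponent of $y$ records the characteristic vector $\chi(S)$ of a bundle $S\subseteq\I$ read as an integer in $[0,2^m-1]$. First I would, for each agent $i\in\A$, construct the family $\Co{F}_i$ of \emph{feasible} bundles: those $S$ that are independent in $\Co{H}$ and satisfy $\sum_{x\in S}\util_i(x)\ge\eta$. From each $\Co{F}_i$ I build the polynomial $\polyn{i}{} := \sum_{S\in\Co{F}_i} y^{\chi(S)}$, so that the monomials of $\polyn{i}{}$ are exactly the exponents encoding $i$'s valid bundles. The goal $\phi$ then corresponds to picking one monomial from each $\polyn{i}{}$ so that the selected bundles are pairwise disjoint; by \Cref{prop:disjoint-set} and \Cref{cor:HW-disjointness}, a collection $S_1,\dots,S_n$ is pairwise disjoint precisely when the Hamming weight of the product monomial $y^{\chi(S_1)+\cdots+\chi(S_n)}$ equals $\sum_j|S_j|$, i.e.\ there is no ``carry'' in the additions of the exponents.

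The core step is to compute the product $\polyn{1}{}\cdots\polyn{n}{}$ but to retain only those terms that correspond to genuinely disjoint unions. The standard trick (mimicking subset convolution, following Cygan--Pilipczuk~\cite{DBLP:journals/tcs/CyganP10}) is to interleave multiplication with Hamming projection: I multiply the polynomials one at a time, and after forming each partial product I replace it by its representative polynomial $\Co{R}(\cdot)$ and keep only the monomials whose Hamming weight equals the sum of the bundle sizes chosen so far — or, more robustly, I track Hamming weight layer by layer so that disjointness is enforced incrementally. Concretely, I would stratify each $\polyn{i}{}$ by bundle size and carry the accumulated Hamming weight as a second index, discarding any monomial of $y^{v}$ whose Hamming weight $H(v)$ is strictly less than the total size it should have (a carry has occurred, signalling overlap). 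Using representative polynomials keeps coefficients in $\{0,1\}$, so intermediate polynomials never blow up beyond $2^m$ monomials, and each pairwise multiplication over polynomials of degree $<2^m$ is done by FFT in $\Oh(2^m\cdot m)$ arithmetic operations. After all $n$ multiplications, I accept if and only if some surviving monomial $y^{v}$ with $H(v)=\sum_j s_j$ appears; reconstructing the actual assignment is a routine back-tracking through the stored layers.

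The running time analysis should give $\Oh(2^m(n+m)^{\Oh(1)})$: constructing each $\Co{F}_i$ and its polynomial takes $\Oh(2^m\cdot(n+m)^{\Oh(1)})$ (iterate over all $2^m$ subsets, test independence and the utility threshold), and the $n-1$ FFT-based multiplications with interleaved projections contribute $\Oh(n\cdot 2^m\cdot m^{\Oh(1)})$, since both the projection and representative operations are linear in the number of monomials. The exhaustive $\Oh(m^m)$ bound is thereby replaced by a clean single-exponential $2^m$.

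I expect the main obstacle to be \emph{correctly enforcing pairwise disjointness through the repeated products without false positives}. The Hamming-weight criterion of \Cref{cor:HW-disjointness} certifies disjointness of two sets, but when multiplying $n$ polynomials one must ensure that the weight bookkeeping composes correctly across all intermediate products — a spurious cancellation or an untracked carry could let overlapping bundles survive, or conversely a legitimate disjoint family could be projected away. Handling this cleanly requires indexing partial products simultaneously by the integer exponent $v$ \emph{and} by the intended cumulative size $\ell=\sum_{j\le i}s_j$, and arguing by induction on $i$ that a monomial $y^v$ survives in the $\ell$-layer after $i$ multiplications if and only if $v=\chi(S_1\cup\cdots\cup S_i)$ for some pairwise-disjoint feasible bundles of total size $\ell$. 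Making this invariant precise, and verifying that the representative-polynomial step preserves it, is the delicate part of the argument.
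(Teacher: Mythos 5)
Your proposal is correct and follows essentially the same route as the paper's proof: encode feasible bundles as monomials $y^{\chi(S)}$, enforce disjointness via the Hamming-weight criterion of \Cref{cor:HW-disjointness}, and interleave FFT-based multiplication with Hamming projection and the representative-polynomial operator, stratifying partial products by accumulated bundle size. The invariant you flag as the delicate point is exactly what the paper establishes by its two inductive claims (one direction for completeness, one for soundness), so your plan matches the published argument in both structure and running-time analysis.
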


\begin{proof}Let $\X{J}=(\A, \I, \{\util_{a}\}_{a\in \A}, \Co{H}, \eta)$ denote an instance of \cffa. 
We start by defining a set family indexed by the agents. Let $\A=[n]$. For an agent $i\in \A$, let $\Co{F}_{i}$ contain each of subsets of \Co{I} that can be {\it feasibly} allocated to $i$ as a bundle. Specifically, a set $S \sse \Co{I}$ is in $\Co{F}_{i}$ if $S$ is an independent set in \Co{H} and the utility $\sum_{x\in S} \util_{i}(x) \geq \eta$. We define the {\it round} \hide{of a polynomial} inductively as follows. 

For round $1$ and a positive integer $s$, we define a polynomial
\[\polyn{1}{s}(y) = \sum_{S\in \Co{F}_{1},\\ |S|=s} y^{\chi(S)} \]

For round $i \in [n] \setminus \{1\}$, and a positive integer $s$, we define a polynomial by using the $\Co{R}(\cdot)$ operator 

\[ \polyn{i}{s}(y) = \sum_{\substack{S \in \Co{F}_{i} \\ s'=s-|S|} } \Co{R}\left(H_{s}\left(\polyn{i-1}{s'}(y) \times y^{\chi(S)}\right) \right)
\]

The algorithm returns ``yes'' if for any positive integer $s\in \mathbb{Z}_{\geq 0}$, 
$\polyn{n}{s}(y)$ is non-zero. In fact, any non-zero monomial in the polynomial ``represents'' a solution for the instance \X{J} such that we can find the bundle to assign to each agent $i \in \A$ by backtracking the process all the way to round 1. 

\noindent{\it Computing a solution (if it exists).} We assume that for some positive integer $s$, $\polyn{n}{s}(y)$ is a non-zero polynomial. Thus, it contains a non-zero monomial of the form $\polyn{n-1}{s'}(y) \times y^{\chi(S)}$, where $S\in \Co{F}_{n}$. Note that $\chi(S)$ describes the bundle assigned to agent $n$, the set $S$.  Since the monomial $\polyn{n-1}{s'}(y) \times y^{\chi(S)}$ exists in the polynomial $\polyn{n}{s}(y)$ after applying $H_{s}(\cdot)$ function, it must be that $\polyn{n-1}{s'}(y) = y^{\chi(S')}$ for some set $S' \sse \Co{I}$ such that $S' \cap S = \emptyset$. By recursively applying the same argument to the polynomial $\polyn{n-1}{s'}(y)$, we can obtain the bundles that are allocated to  the  agents $i=n-1,\ldots,1$. 

\begin{lemma}
The above algorithm returns ``yes'' if and only if \X{J} is a \yes-instance of \cffa.
\end{lemma}

\begin{proof}Suppose that \X{J} is a \yes-instance of \cffa. Then, there is an {\it assignment}, i.e., an injective function $\phi$ that maps \A to subsets of \I. For each agent $i\in \Co{A}$, we define $S_{i} = \phi(i)$. We begin with the following claim that enables us to conclude that the polynomial $\polyn{n}{s}(y)$, where $s=\sum_{i\in [n]}|S_i|$, contains the monomial $y^{\sum_{i\in [n]}\chi(S_i)}$. 

\begin{clm}\label{clm:fft-fwd}
For each $j\in [n]$, the polynomial $\polyn{j}{s}(y)$, where $s=\sum_{i\in [j]}|S_i|$, contains the monomial $y^{\sum_{i\in [j]}\chi(S_i)}$. 
\end{clm}

\begin{proof}
The proof is by induction on $j$. 

\noindent {\bf Base Case:} $j=1$. We first note that $S_1$ is in the family $\Co{F}_1$ as it is a feasible bundle for the agent $1$. Thus, due to the construction of the polynomial $\polyn{1}{s}(y)$, we know that $\polyn{1}{|S_1|}(y)$ contains the monomial $y^{\chi(S_1)}$. 

\noindent{\bf Induction Step:} Suppose that the claim is true for $j= j'-1$. We next prove it for $j=j'$. To construct the polynomial $\polyn{j'}{s}(y)$, where  $s=\sum_{i\in [j']}|S_i|$, we consider the multiplication of polynomial $\polyn{j'-1}{s'}(y)$, where $s'=\sum_{i\in [j'-1]}|S_i|$, and $y^{\chi(S_{j'})}$. Due to the inductive hypothesis, $\polyn{j'-1}{s'}(y)$, where $s'=\sum_{i\in [j'-1]}|S_i|$, contains the monomial $y^{\sum_{i\in [j'-1]}\chi(S_i)}$. Note that $S_{j'}$ is in the family $\Co{F}_{j'}$ as it is a feasible bundle for the agent $j'$. Since $S_{j'}$ is disjoint from $S_1\cup \ldots \cup S_{j'-1}$, due to \Cref{cor:HW-disjointness}, we can infer that $\polyn{j'}{s}(y)$, where  $s=\sum_{i\in [j']}|S_i|$, has the monomial $y^{\sum_{i\in [j']}\chi(S_i)}$. 
\end{proof}

\begin{sloppypar}
Due to Claim~\ref{clm:fft-fwd}, we can conclude that $\polyn{n}{s}(y)$, where $s=\sum_{i\in [n]}|S_i|$, contains the monomial $y^{\sum_{i\in [n]}\chi(S_i)}$. For the other direction, suppose that the algorithm returns ``yes''. Then, for some positive integer $s$, $\polyn{n}{s}(y)$ is a non-zero polynomial. We need to show that there exists pairwise disjoint  sets $S_1,\ldots,S_n$ such that $S_i \in \Co{F}_i$, where $i\in [n]$. This will give us an assignment function $\phi$, where $\phi(i)=S_i$. Since each $S \in \Co{F}_i$, where $i\in [n]$, is an independent set and $\sum_{x\in S}\util_{i}(x) \geq \eta$, $\phi$ is a feasible assignment. We next prove the following claim that enables us to conclude the existence of pairwise disjoint sets. 
\end{sloppypar}

\begin{clm}\label{clm:fft-reverse}
For each $j\in [n]$, if the polynomial $\polyn{j}{s}(y)$ is non-zero for some $s\in [m]$, then there exists $j$ pairwise disjoint  sets $S_1,\ldots,S_j$ such that $S_i \in \Co{F}_i$, where $i\in [j]$. 
\end{clm}

\begin{proof}
We prove it by induction on $j$. 

\noindent{\bf Base Case:} $j=1$. Suppose $\polyn{1}{s}(y)$ is non-zero for some $s\in [m]$. Then, it contains a monomial $y^{\chi(S)}$, where $S\in \Co{F}_1$. Thus, the claim is true. 

\noindent{\bf Induction Step:} Suppose that the claim is true for $j=j'-1$. We next prove it for $j=j'$. Suppose that $\polyn{j'}{s}(y)$ is non-zero for some $s\in [m]$. Then, it contains a monomial of the form $\polyn{j-1}{s'}(y) \times y^{\chi(S)}$, where $|S|=s-s'$ and $S\in \Co{F}_{j'}$. Due to induction hypothesis, since  $\polyn{j-1}{s'}(y)$ is a non-zero polynomial,  there exists $j'-1$ pairwise disjoint  sets $S_1,\ldots,S_{j'-1}$ such that $S_i \in \Co{F}_i$, where $i\in [j'-1]$. Furthermore, due to \Cref{cor:HW-disjointness}, we have that $S_{j'}$ is disjoint from $S_1\cup \ldots \cup S_{j'-1}$. Thus, we have $j'$ pairwise disjoint  sets $S_1,\ldots,S_{j'}$ such that $S_i \in \Co{F}_i$, where $i\in [j']$.  \qed
\end{proof}
This completes the proof. \qed
\end{proof}

To claim the running time, we use the following well-known result about polynomial multiplication.

\begin{proposition}[\cite{moenck1976practical}]\label{prop:polynomial-multiplication}
There exists an algorithm that multiplies two polynomials of degree $d$ in $\Oh(d \log d)$ time.
\end{proposition}

\begin{lemma}\label{lem:time-fpt-m}
This algorithm runs in $\Oh(2^{m}\cdot(n+m)^{\Oh(1)})$ time. 
\end{lemma}

\begin{proof}
In the algorithm, we first construct a family of feasible bundles for each agent $i\in \A$. Since we check all the subsets of $\I$, the constructions of families takes $\Oh(2^m\cdot (n+m)^{\Oh(1)})$ time. For $i=1$, we construct $m$ polynomials that contains $\Oh(2^m)$ terms. Thus, $p_s^1(y)$ can be constructed in $\Oh(2^m\cdot m)$ time. Then, we recursively construct polynomials by polynomial multiplication. Since every polynomial has degree at most $\Oh(2^m)$, due to \Cref{prop:polynomial-multiplication}, every polynomial multiplication takes $\Oh(2^m \cdot m)$ time. Hence, the algorithm runs in  $\Oh(2^{m}\cdot(n+m)^{\Oh(1)})$ time.  \qed

\end{proof}
Thus, the theorem is proved. \qed

\end{proof}

\section{\cffa:  Parameterized by ${\sf \#agents}$ and ${\sf bundleSize}$ }\label{sec:agents+bundlesize}
In this section, we study \cffa  parameterized by $n={\sf \#agents}$,  ${\sf bundleSize}$,  
and their combinations. We first show some hardness results and then complement it with our main algorithmic result. 

\subsection{{\sf NP-hardness} when conflict graph is edgeless and bundle size is bounded }

Since \cffa is \nph for all the graph classes for which {\sc MWIS} is \nph~\cite{DBLP:conf/iwoca/ChiarelliKMPPS20}, in this section, we first discuss the intractability of the problem for special classes of graph when {\sc  MWIS} can be solved in polynomial time. 
In particular, we show that the problem is \nph even when the conflict-graph is edgeless and size of every bundle is at most $3$, which is due to the reduction from the \partition problem. In the \partition problem, we are given a set $X$ of $3\tilde{m}$ elements, a bound $B \in \mathbb{Z}_+$, and a size $s(x) \in \mathbb{Z}_+$  for each $x\in X$ such that $\nicefrac{B}{4} < s(x) < \nicefrac{B}{2}$ and $\sum_{x\in X} s(x)=\tilde{m}B$. The goal is to decide whether there exists a partition of $X$ into $\tilde{m}$ disjoint sets $X_{1}, X_{2}, \ldots, X_{\tilde{m}}$ such that for each $1\leq i \leq \tilde{m}$, $\sum_{x\in X_i} s(x) = B$. Note that each $X_i$ must contain three elements from $X$.  
To the best of our ability, we could not find a citation  for this result and hence we have included it here for completeness. 

\begin{theorem}\label{thm:nph-edgeless}
\cffa is \npc when \Co{H} is edgeless and $s$ is three.
\end{theorem}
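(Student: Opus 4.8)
The plan is to give a polynomial-time many-one reduction from \partition to \cffa, restricted to instances where \Co{H} is edgeless and every bundle has size at most three, and then argue both directions of the equivalence. Since \partition is strongly \nph, the target problem inherits \nph-ness even with the stated restrictions; membership in \sf NP is immediate since a candidate assignment $\phi$ can be verified in polynomial time. The reduction is the substance, so let me describe it.

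Given a \partition instance $(X,B,s)$ with $|X|=3\tilde m$ and $\sum_{x\in X}s(x)=\tilde m B$, I would build a \cffa instance as follows. Introduce one agent per target block, so $\A=\{a_1,\dots,a_{\tilde m}\}$ (i.e.\ $n=\tilde m$), and one \object per element, so $\I=\{v_x : x\in X\}$ with $|\I|=3\tilde m$. Make \Co{H} the \emph{edgeless} graph on \Co{I}, so the independence constraint is vacuous. Set a \emph{uniform} utility $\util_a(v_x)=s(x)$ for every agent $a$, and set the threshold $\eta=B$. Intuitively, a feasible assignment forces each agent to receive a bundle of total size at least $B$; since the total available size is exactly $\tilde m B$ and there are $\tilde m$ agents, each bundle must have size exactly $B$ and every \object must be assigned, which is precisely a \partition. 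The numerical condition $B/4<s(x)<B/2$ guarantees that any subset of elements summing to exactly $B$ contains exactly three elements (two elements sum to less than $B$, four elements sum to more than $B$), so the constraint $s\le 3$ on bundle size is automatically satisfied and need not be enforced externally; this is the point where the \partition size bounds do the real work.

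For correctness I would argue both directions. Forward: given a partition $X=X_1\sqcup\dots\sqcup X_{\tilde m}$ with each $\sum_{x\in X_i}s(x)=B$, define $\phi(a_i)=\{v_x:x\in X_i\}$; each bundle is independent (trivially, \Co{H} is edgeless), has utility exactly $B\ge\eta$, the bundles are pairwise disjoint, and each has size three by the size bounds, so this is a valid \yes-certificate. Backward: suppose $\phi$ is a feasible assignment. Then $\sum_{i}\sum_{x\in\phi(a_i)}s(x)\ge \tilde m B$ by summing the $\eta$-thresholds over all $\tilde m$ agents; but the bundles are disjoint subsets of \Co{I}, so the left side is at most $\sum_{x\in X}s(x)=\tilde m B$. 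Equality must hold throughout, forcing every \object to be assigned and each bundle to have utility exactly $B$; the size bounds then force $|\phi(a_i)|=3$, so $\{\,\{x:v_x\in\phi(a_i)\}\,\}_i$ is a valid \partition. This double-counting / pigeonhole equality argument is the heart of the correctness proof.

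I expect the main obstacle to be purely expository rather than mathematical: carefully invoking the constraint $B/4<s(x)<B/2$ to pin down bundle cardinality to exactly three, and making sure the reduction produces a bona fide \cffa instance (where $s$ is not part of the input) rather than a \dcffa instance, so that the theorem statement about $s=3$ is read as a structural property of all feasible bundles rather than an input bound. A secondary care point is that the reduction is polynomial and that sizes are encoded so that strong \nph-ness transfers; since \partition is strongly \nph and the utilities equal the original sizes, no blow-up occurs and the reduction is clearly polynomial.
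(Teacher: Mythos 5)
Your reduction is correct and proves the theorem; it uses the same source problem as the paper (\partition, with one agent per block and an edgeless \Co{H}) but a genuinely different gadget, and hence a different correctness argument. The paper sets the utilities to the \emph{complements} $\util_{a_i}(x)=B-s(x)$ with threshold $\eta=2B$, whereas you keep the identity utilities $\util_a(x)=s(x)$ with $\eta=B$. In your version, feasibility plus disjointness gives $\tilde{m}B\le\sum_{i}\sum_{x\in\phi(a_i)}s(x)\le\sum_{x\in X}s(x)=\tilde{m}B$, so equality pins every bundle to value exactly $B$ (and forces every \object to be assigned), and the window $\nicefrac{B}{4}<s(x)<\nicefrac{B}{2}$ then pins every bundle to exactly three elements; the size window does the real work of bounding cardinality. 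In the paper's version the window is needed only so that $B-s(x)$ is a positive natural: since each \object contributes utility strictly less than $B$, no bundle of at most two \items can reach $2B$, so every bundle has at least three \items, a pigeonhole over the $3\tilde{m}$ \items forces exactly three, and a separate complementary-counting step shows each bundle's utility is exactly $2B$ before recovering the partition. Your single global double-counting equality is arguably more streamlined than the paper's two-stage (local pigeonhole, then global) argument, while the paper's complemented utilities make the bound $s=3$ follow from the threshold alone rather than from the element-size window. Both reductions also satisfy the structural reading of ``$s$ is three'' that you flag --- in either instance every feasible solution automatically has all bundles of size exactly three --- so the theorem holds whether one interprets it as a property of \cffa solutions or as an explicit \dcffa size bound.
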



\begin{proof}\begin{sloppypar}
Given an instance $\X{J}=(X,B,\{s(x)\}_{x\in X})$ of \partition, we create an instance $\X{J'}=(\Co{A}, \Co{I}, \{\util_{a}\}_{a\in \Co{A}}, \Co{H}, \eta=2B)$ of \cffa, where $\Co{I}=X$ and $\Co{H}$ is an edgeless graph on the vertex set $\Co{I}$.  We define a set of agents $\Co{A}=\{a_{1}, \ldots, a_{\tilde{m}}\}$ and for each agent $a_{i}\in \Co{A}$, we define the utility function $\util_{a_{i}}(x) = B-s(x)$ for each \object $x\in \Co{I}$.  
%
%
The intuition behind this construction is that we want to create a {\it bundle} so that the utility derived by an agent from that bundle is at least $2B$, which will be attainable only if the bundle size is three. 
\end{sloppypar}
Next, we prove the correctness of the reduction. 

\begin{lemma}
$\X{J}$ is a \yes-instance of \partition if and only if $\X{J}'$ is a \yes-instance of \cffa.
\end{lemma}

\begin{proof} If $\X{J}$ is a \yes-instance of \partition, then there is a solution $X_1,\ldots,X_{\tilde{m}}$ that satisfies the desired properties, i.e., for each $1\leq i \leq \tilde{m}$, $\sum_{x\in X_i} s(x) = B$. Note that $\sum_{x\in X_i}u_{a_i}(x)=3B-B=2B$. Thus, the assignment function $\phi$, where $\phi(a_i)=X_i$, yields a solution for $\X{J}'$. 

For the other direction, let $\phi$ be a solution for $\X{J}'$. That is, for each agent $a\in \Co{A}$, $\phi(a)$ is the bundle assigned to the agent $a$. Thus, $\sum_{x \in \phi(a)} \util_{a}(x)\geq 2B$. 

We claim that for each agent $a \in \Co{A}$, the bundle size $|\phi(a)| =3$. If the size is at most two, then $\util_{a}(\phi(a))\leq 2B- \sum_{x \in \phi(a)} s(x) < 2B$, since $\phi(a)$ is non-empty and for each object $x\in \phi(a)$, $s(x)$ is positive by definition. This is a contradiction. Hence, the only possibility is that  for each agent $a \in \Co{A}$, $|\phi(a)| \geq 3$. If for some agent $a\in \Co{A}$, bundle $\phi(a)$ has more than three \items, then for some agent $a' \neq a$, bundle $\phi(a')$ will contain at most two \items, and thus will not attain the target. Hence, for each agent, the bundle size is exactly three. 

Next, we claim that for each agent the utility of its bundle is exactly $\eta$. Suppose that there is an agent $a\in \Co{A}$, such that utility of its bundle, $\sum_{x\in \phi(a)}\util_{a}(x) > 2B$. By definition, $\sum_{x\in \phi(a)}\util_{a}(x) = 3B - \sum_{x \in \phi(a)} s(x)$. Thus, it follows that $\sum_{x \in \phi(a)} s(x) <B$. Since $\sum_{x\in X}s(x) = \tilde{m}B$, it must be that $\sum_{x \in \Co{I}\setminus \phi(a)} s(x) > (\tilde{m}-1)B$. 
Moreover, each bundle has size exactly three, and $\Co{I}\setminus \phi(a)$ has $3(\tilde{m}-1)$ \items, so there must exist a bundle $\phi(a')$ for some agent $a' \neq a$ such that $\sum_{x \in \phi(a')} s(x) > B$, and so that agent's utility $\util_{a'}(\phi(a')) = 3B- \sum_{x \in \phi(a')} s(x) < 2B$. Hence, we have reached a contradiction. Thus, for every agent $a\in \Co{A}$, the utility of its bundle is exactly $2B$. 

We now note that we can form a solution for the instance $\X{J}$ of \partition by taking each of the three-set \items constituting each bundle. More specifically, for each $i \in [\tilde{m}]$, we define $X_{i}= \{x \colon x \in \phi(a_i)\}$. For each $i\in [\tilde{m}]$, since $\util_{a_i}(x) = B- s(x)$, we have
\begin{equation*}
\begin{split}
\sum_{x \in X_{i}} s(x) & = \sum_{x\in \phi(i)} (B-\util_{a_i}(x)) = 3B - \sum_{x\in \phi(i)} \util_{a_i}(x)\\
& = 3B - 2B=B
\end{split}
\end{equation*}

Hence, $\X{J}$ is a \yes-instance of \partition. \qed 
\end{proof}
Thus, the theorem is proved. \qed
\end{proof}

%
%

%

\subsection{Proof of Theorem~\ref{thm:equiavlence}} 
\begin{sloppypar}
In this section, we give the proof of  Theorem~\ref{thm:equiavlence}. 
Let $\X{J}=(\A, \I, \{\util_{a}\}_{a\in \A}, \Co{H}, s,\eta)$  be an instance of \cffag, and let $|{\mathscr J}|$ denote the size of the instance.  We first prove the first part of Theorem~\ref{thm:equiavlence}, which is the easier direction of the proof.  In particular, let $\mathbb{A}$ be an \fpt algorithm for  \cffag, running in time $f(n,s) \ptime$. Given an instance $(G,k,\rho,w)$ of \mwisg, we construct an instance $\X{J}=(\A, \I, \{\util_{a}\}_{a\in \A}, \Co{H},s, \eta)$   of \cffag as follows. The set of agents $\A$ has only one agent $a^\star$. Further, $\I=V(G)$, $\util_{a^\star}=w$,  $\Co{H}=G$, $s=k$, and $\eta=\rho$. It is easy to see that 
$(G,k,\rho,w)$ is a \yes-instance of \mwisg if and only if $\X{J}$  is a  \yes-instance of \cffag. Thus, by invoking algorithm $\mathbb{A}$ on instance $\X{J}$ of \cffag, we get an \fpt algorithm for \mwisg that runs in $f(k) \ptime$ time. This completes the proof in the forward direction. In the rest of the section, we prove the reverse direction of the proof. That is, given an \fpt algorithm for \mwisg, we design an \fpt algorithm for \cffag.  For ease of explanation, we first present a randomized algorithm which will be derandomized later using the known tool of {\em $(p,q)$-perfect hash family}~\cite{alon1995color,fomin2014efficient}. 
\end{sloppypar}

%

\subsubsection{Randomized Algorithm}\label{sec:randomized algo}
In this section, we design a randomized algorithm with the following specification. If the input, \X{J}, is a  \no-instance then the algorithm always returns ``no''. However,   if the input, \X{J}, is a  \yes-instance then the algorithm returns ``yes'' with probability at least $1/2$.

Throughout this section, we assume that we have been given a \yes-instance. This implies that 
  there exists a  hypothetical solution $\phi \colon \Co{A} \rightarrow 2^{\Co{I}}$. We define everything with respect to $\phi$. That is, $\phi \colon \Co{A}  \rightarrow 2^{\Co{I}}$ is an injective function satisfying all the requirements. Let $S=\phi(\Co{A})=\cup_{a\in \Co{A}}\phi(a)$, i.e., the set of \objects that are assigned to some agent.  Further, note that $|S|\leq ns$, as the size of each bundle is upper bounded by $s$. 
Our main idea is to first highlight all the \objects in the set $S$, that are assigned to some agent, using color coding.   
\begin{tcolorbox}[boxsep=5pt,left=5pt,top=5pt,colback=green!5!white,colframe=gray!75!black]
{\bf  Separation of \items:}  Color the vertices of  $\Co{H}$ uniformly and independently at random using $ns$ colors, say $\{1,\ldots,ns\}$. 
\end{tcolorbox}

The goal of the coloring is that ``with high probability'', we color the \objects assigned to agents in a solution using distinct colors. The following proposition bounds the success probability. 
%

\begin{proposition}{\rm \cite[Lemma 5.4]{ParamAlgorithms15b}}\label{prop:success-prob} Let $U$ be a universe and $X\subseteq U$. Let $\chi \colon U \rightarrow [|X|]$ be a function that colors  each element of $U$ with one of $|X|$ colors uniformly and independently at random. Then, the probability that the elements of $X$ are colored with pairwise distinct colors is at least $e^{-|X|}$.
\end{proposition}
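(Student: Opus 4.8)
The plan is to compute the success probability exactly and then apply a single elementary inequality. Write $k = |X|$. The key initial observation is that only the colors assigned to the elements of $X$ are relevant to the event in question; since the coloring $\chi$ assigns colors to the elements of $U$ independently, I can restrict attention to the induced coloring $\chi|_X \colon X \to [k]$ and ignore $U \setminus X$ entirely. The event that the elements of $X$ receive pairwise distinct colors is then precisely the event that $\chi|_X$ is injective, and because $X$ has exactly $k$ elements and there are exactly $k$ available colors, injectivity here is equivalent to $\chi|_X$ being a bijection onto $[k]$.

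Next I would count. The total number of ways to color the $k$ elements of $X$, each independently from $[k]$, is $k^k$, and these outcomes are equiprobable. The number of favorable outcomes is the number of bijections from $X$ to $[k]$, namely $k!$. Hence the probability that the elements of $X$ get pairwise distinct colors equals $k!/k^k$.

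It remains to show $k!/k^k \ge e^{-k}$, i.e.\ $k! \ge (k/e)^k$. I would obtain this from the standard power-series bound: since all terms in the expansion $e^{k} = \sum_{i \ge 0} k^i/i!$ are nonnegative, the single term $i = k$ gives $e^{k} \ge k^k/k!$, which rearranges to $k! \ge k^k/e^{k} = (k/e)^k$. Substituting this back yields
\[
\frac{k!}{k^k} \;\ge\; \frac{(k/e)^k}{k^k} \;=\; e^{-k} \;=\; e^{-|X|},
\]
as required.

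I do not anticipate a genuine obstacle here: the argument is purely combinatorial and the only non-bookkeeping ingredient is the inequality $k! \ge (k/e)^k$, which is elementary and follows from a one-line comparison with a single term of the exponential series (it is also the easy direction of Stirling's formula). The one point worth stating carefully is the reduction to $\chi|_X$ and the identification of ``pairwise distinct colors'' with bijectivity, so that the count $k!/k^k$ is manifestly correct; everything after that is immediate.
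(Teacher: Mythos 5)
Your proof is correct: the paper itself gives no argument for this proposition, citing it directly as Lemma~5.4 of the parameterized algorithms textbook, and your derivation (exact probability $k!/k^k$ followed by the bound $k! \ge (k/e)^k$ from the single term $k^k/k!$ in the series for $e^k$) is precisely the standard proof behind that citation.
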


Due to \Cref{prop:success-prob}, the coloring step of the algorithm colors the \objects in $\phi(\Co{A})$ using distinct colors with probability at least $e^{-ns}$. We call an assignment $\phi \colon \Co{A} \rightarrow 2^{\Co{I}}$ as {\em colorful} if every two \items $\{i, i'\} \in \phi(A)$ get distinct color. Moreover, for each $a$, $|\phi(a)|\leq s$.


Next, we find a {\em colorful} feasible assignment in the following lemma.  Further, let us assume that we have an \fpt algorithm, $\mathbb{B}$, for \mwisg  running in time $h(k)n^{\Oh(1)}$.

\begin{lemma}\label{lem:colorful_solution}
Let  $\X{J}=(\A, \I, \{\util_{a}\}_{a\in \A}, \Co{H}, s,\eta)$
be an instance of \cffag and $\chi \colon V(\Co{H}) \rightarrow [ns]$ be a coloring function. Then, there exists a dynamic programming algorithm that finds a colorful feasible assignment $\phi \colon \Co{A} \rightarrow 2^{\Co{I}}$ in $\Oh(3^{ns}\cdot h(s) \cdot (n+m)^{\Oh(1)})$ time, if it exists, otherwise, return ``no''.
\end{lemma}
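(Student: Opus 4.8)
The plan is to exploit the random coloring $\chi$ via a subset-style dynamic program that hands out pairwise disjoint blocks of colors to the agents, one block per bundle, and tests each block for feasibility through a single call to the assumed algorithm $\mathbb{B}$ for \mwisg. For a color set $C' \subseteq [ns]$ write $V_{C'} = \{v \in V(\Co{H}) : \chi(v) \in C'\}$ for the vertices whose color lies in $C'$. Since $\Co{G}$ is hereditary and $\Co{H} \in \Co{G}$, the induced subgraph $\Co{H}[V_{C'}]$ again lies in $\Co{G}$, so $\mathbb{B}$ may legitimately be invoked on it. I would first precompute, for every agent $j \in [n]$ and every color set $C' \subseteq [ns]$ with $|C'| \le s$, a bit $\mathrm{feas}(j, C')$ that is true exactly when agent $j$ admits a feasible bundle drawn from $V_{C'}$; concretely, I run $\mathbb{B}$ on $\Co{H}[V_{C'}]$ with parameter $s$ and weights $\util_j$, and set $\mathrm{feas}(j, C') = 1$ iff the returned maximum-weight independent set of size at most $s$ has utility at least $\eta$.

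With these bits in hand I would fill a table $D[j][C]$, indexed by $j \in \{0, 1, \dots, n\}$ and $C \subseteq [ns]$, whose intended meaning is that the first $j$ agents can receive pairwise vertex-disjoint feasible bundles occupying exactly the colors in $C$. The base case is $D[0][\emptyset] = 1$, and the transition is
\[
D[j][C] = \bigvee_{\substack{C' \subseteq C \\ |C'| \le s}} \bigl(D[j-1][C \setminus C'] \wedge \mathrm{feas}(j, C')\bigr),
\]
the algorithm answering ``yes'' iff $D[n][C] = 1$ for some $C \subseteq [ns]$. Because the color blocks $C'$ handed to distinct agents are disjoint, the corresponding bundles live in disjoint color classes and are therefore automatically vertex-disjoint; each is an independent set of size at most $s$ and utility at least $\eta$. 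Thus soundness is immediate: any ``yes'' reconstructed by backtracking the $C'$ chosen at each layer yields a genuine feasible assignment. For completeness I would argue that if a colorful feasible assignment $\phi$ exists, then setting $C'_j = \chi(\phi(j))$ gives pairwise disjoint color sets of size $|\phi(j)| \le s$; since $\phi(j) \subseteq V_{C'_j}$ is an independent set of utility at least $\eta$, each $\mathrm{feas}(j, C'_j)$ is true and the corresponding branch of the recursion succeeds.

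For the running time I would note that there are at most $n \cdot 2^{ns}$ pairs $(j, C')$, each costing one call to $\mathbb{B}$, i.e. $h(s)\,(n+m)^{\Oh(1)}$ time, and that the dynamic program performs $\sum_{C \subseteq [ns]} 2^{|C|} = 3^{ns}$ Boolean combinations per agent. Hence the total is $\Oh(3^{ns} \cdot h(s) \cdot (n+m)^{\Oh(1)})$, as claimed.

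The step I expect to be the main obstacle is making the per-bundle feasibility test agree exactly with the notion of a \emph{colorful} assignment, in which every item of $\phi(\Co{A})$ receives a distinct color: this demands that each bundle use at most one vertex per color, a partition-type constraint that a plain \mwisg oracle does not enforce, and which I cannot simply impose by adding edges between equally-colored vertices without risking leaving the hereditary class $\Co{G}$. My resolution is to observe that the lemma is only ever applied after the random coloring, whose sole purpose is to make a hypothetical solution rainbow; it therefore suffices to let $\mathrm{feas}(j, C')$ test for an arbitrary (not necessarily internally rainbow) independent set inside $V_{C'}$, since the disjointness of the color blocks across agents already guarantees vertex-disjoint bundles, and a genuinely colorful solution, when one exists, still activates the appropriate branch of the recurrence as shown above. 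The remaining care is purely in the backtracking bookkeeping, so that the reconstructed bundles are returned explicitly rather than merely decided.
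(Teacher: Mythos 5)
Your proposal matches the paper's proof essentially step for step: the paper runs the same subset dynamic program over color sets (its table $T[i,S]$ with recursion $T[i,S]=\bigvee_{\emptyset\neq S'\subset S} T[i-1,S']\wedge \mathbb{I}(\Co{H}_{a_i,S\setminus S'})$), where the oracle $\mathbb{I}$, computed by the assumed \mwisg algorithm $\mathbb{B}$ on the color-induced subgraph, plays exactly the role of your $\mathrm{feas}(j,C')$, and the running-time analysis is the same $3^{ns}\cdot h(s)\cdot (n+m)^{\Oh(1)}$ count. The subtlety you flag at the end --- that the oracle does not enforce distinct colors \emph{within} a bundle, which is harmless because disjoint color blocks across agents already force vertex-disjoint bundles, while a genuinely colorful solution still activates the right branch --- is precisely the implicit relaxation in the paper's argument, so your resolution is sound and, if anything, stated more carefully than in the paper.
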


\begin{proof}
Let $\mathsf{colors} = \{1,\ldots,ns\}$ be the set of colors and let $a_1,\ldots,a_n$ be an arbitrary ordering of the agents. We apply dynamic programming: for a non-empty set $S\subseteq \mathsf{colors}$ and $i\in [n]$, we define the table entry $T[i,S]$ as $1$ if there is a colorful feasible assignment of \objects (that are colored by the function $\chi$) using colors in $S$ to agents $\{a_1,\ldots,a_i\}$; otherwise it is $0$.  For an agent $a\in \Co{A}$ and $S\subseteq \mathsf{colors}$, let $\Co{H}_{a,S}$ be a vertex-weighted graph constructed as follows. Let $V_S$ be the subset of vertices in $\Co{H}$ that are colored using the colors in $S$. Then,  $\Co{H}_{a,S}=\Co{H}[V_S]$. 
The weight of every vertex $x \in \Co{H}_{a,S}$ is $\util_a(x)$. For a vertex-weighted graph $G$, let $\mathbb{I}(G)\in \{0,1\}$, where 
$\mathbb{I}(G)=1$ if there exists an independent set of size at most $s$ and weight at  least $\eta$ in $G$, otherwise $0$. We compute $\mathbb{I}(G)$ using algorithm $\mathbb{B}$.
%
 We compute the table entries as follows. 
 
\noindent{\bf Base Case: } For $i=1$ and non-empty set $S$, we compute as follows:
\begin{equation}\label{eq:base case}
T[1,S] = \mathbb{I}(\Co{H}_{a_1,S})
\end{equation}

\noindent{\bf Recursive Step: } For $i>1$ and non-empty set $S$, we compute as follows:
\begin{equation}\label{eq:recursion}
T[i,S] = \bigvee_{\emptyset \neq S' \subset S} T[i-1,S'] \wedge \mathbb{I}(\Co{H}_{a_i,S\setminus S'})
\end{equation}

We return ``yes'' if $T[n,S]=1$ for some $S\subseteq \mathsf{colors}$, otherwise ``no''. 
Next, we prove the correctness of the algorithm. Towards this, we prove the following result. 
\begin{clm}\label{clm:correctness-equations}
\Cref{eq:base case} and \Cref{eq:recursion} correctly compute $T[i,S]$, for each $i\in [n]$ and $\emptyset \neq S \subseteq \mathsf{colors}$. 
\end{clm}

\begin{proof}
We will prove it by induction on $i$. For $i=1$, we are looking for any feasible assignment of \objects colored using the colors in $S$ to the agent $a_1$. Thus, \Cref{eq:base case} computes $T[1,S]$ correctly due to the construction of the graph $\Co{H}_{a_1,S}$ and the correctness of algorithm $\mathbb{B}$.  

Now, consider the recursive step. For $i>1$ and $\emptyset \neq S \subseteq \mathsf{colors}$, we compute $T[i,S]$ using \Cref{eq:recursion}. We show that the recursive formula is correct. Suppose that  \Cref{eq:recursion} computes $T[i',S]$ correctly, for all $i'<i$ and   $\emptyset \neq S \subseteq \mathsf{colors}$. First, we show that $T[i,S]$ is at most the R.H.S. of  \Cref{eq:recursion}. If $T[i,S]=0$, then the claim trivially holds. Suppose that $T[i,S]=1$. Let $\psi$ be a colorful feasible assignment to agents $\{a_1,\ldots,a_i\}$ using \objects that are colored using colors in $S$. Let $S_j\subseteq S$ be the set of colors of \objects in $\psi(a_j)$, where $j\in [i]$.  Since $\psi(a_i)$ uses the colors from the set $S_i$ and $\sum_{x \in \psi(a_i)}\util_{a_i}(x) \geq \eta$, due to the construction of $\Co{H}_{a_i,S_i}$, we have that  $\mathbb{I}(\Co{H}_{a_i,S_i})=1$.   Consider the assignment  $\psi'=\psi\vert_{\{a_1,\ldots,a_{i-1}\}}$ (restrict the domain to $\{a_1,\ldots,a_{i-1}\}$). Since  $S_i$ is disjoint from $S_1\cup \ldots \cup S_{i-1}$ due to the definition of colorful assignment, $\psi'$ is a feasible assignment for the agents $\{a_1,\ldots,a_{i-1}\}$ such that the color of all the \objects in $\psi'(\{a_1,\ldots,a_{i-1}\})$ is in $S\setminus S_i$. Furthermore, since $\psi$ is colorful, $\psi'$ is also colorful. Hence, $T[i-1,S\setminus S_i]=1$ due to induction hypothesis. Hence, R.H.S. of \Cref{eq:recursion} is $1$. Thus, $T[i,S]$ is at most R.H.S. of  \Cref{eq:recursion}.

For the other direction, we show that $T[i,S]$ is at least R.H.S. of  \Cref{eq:recursion}. If R.H.S. is $0$,  then the claim trivially holds. Suppose R.H.S. is $1$. That is, there exists $S' \subseteq S$ such that $T[i-1,S'] = 1$ and $\mathbb{I}(\Co{H}_{a_i,S\setminus S'})=1$.  Let $\psi$ be a colorful feasible assignment to agents $\{a_1,\ldots,a_{i-1}\}$ using \objects that are colored using colors in $S'$. Since $\mathbb{I}(\Co{H}_{a_i,S\setminus S'})=1$, there exists a subset $X\subseteq V_{S\setminus S'}$ such that $\sum_{x \in X}\util_{a_i}(x) \geq \eta$.  Thus, construct an assignment $\psi'$ as follows: $\psi'(a)=\psi(a)$, if $a\in \{a_1,\ldots,a_{i-1}\}$ and $\psi'(a_i)=X$. Since $\psi'$ is a feasible assignment and $\mathbb{I}(\Co{H}_{a_i,S\setminus S'})=1$, $\psi$ is a feasible assignment. Furthermore, since $\psi$ is colorful and $\psi(\{a_1,\ldots,a_{i-1}\})$ only uses colors from the set $S'$, $\psi'$ is also colorful. Hence, $T[i,S]=1$.
\qed \end{proof}

%
Due to Claim~\ref{clm:correctness-equations}, $T[n,S]=1$  for some $S\subseteq \mathsf{colors}$  if and only if $\X{J}$ is a yes-instance of \cffag. This completes the proof of the lemma.
\qed \end{proof}


Due to \Cref{prop:success-prob} and \Cref{lem:colorful_solution}, we obtain an $\Oh(3^{ns}\cdot h(s) \cdot (n+m)^{\Oh(1)})$ time randomized algorithm for \cffag which succeeds with probability $e^{-ns}$. Thus, by repeating the algorithm independently $e^{ns}$ times, we obtain the following result.

\begin{sloppypar}
\begin{theorem}\label{thm:randomized_algo}
There exists a randomized algorithm that given an instance $\X{J}=(\A, \I, \{\util_{a}\}_{a\in \A}, \Co{H}, s,\eta)$ of \cffag either reports a failure or finds a feasible assignment in $\Oh((3e)^{ns}\cdot h(s) \cdot (n+m)^{\Oh(1)})$ time. Moreover, if the algorithm is given a yes-instance, the algorithm returns ``yes'' with  probability at least $1/2$, and if the algorithm is given a no-instance, the algorithm returns ``no'' with  probability $1$. 
\end{theorem}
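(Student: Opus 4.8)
The plan is to obtain Theorem~\ref{thm:randomized_algo} as a direct consequence of the two tools already established: the coloring step (Proposition~\ref{prop:success-prob}) and the dynamic programming routine (Lemma~\ref{lem:colorful_solution}). The overall strategy is the standard color-coding template: a single random coloring highlights an optimal solution with some probability, a deterministic subroutine detects any \emph{colorful} solution, and independent repetitions boost the success probability to a constant.

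First I would assemble the one-shot algorithm. Given the instance $\X{J}$, color the vertices of $\Co{H}$ uniformly and independently at random with $ns$ colors as in the \emph{Separation of items} box, then run the dynamic programming algorithm of Lemma~\ref{lem:colorful_solution} on this coloring. If $\X{J}$ is a \no-instance, then no feasible assignment exists at all, so in particular no colorful feasible assignment exists, and Lemma~\ref{lem:colorful_solution} returns ``no'' with probability $1$; this establishes the one-sided error guarantee. If $\X{J}$ is a \yes-instance with hypothetical solution $\phi$, recall $|S|=|\phi(\Co{A})|\leq ns$; by Proposition~\ref{prop:success-prob} applied with $U=V(\Co{H})$ and $X=S$, the at most $ns$ vertices of $S$ receive pairwise distinct colors with probability at least $e^{-ns}$. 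Conditioned on this event, $\phi$ is itself a colorful feasible assignment, so the dynamic program of Lemma~\ref{lem:colorful_solution} reports ``yes''. Hence the one-shot algorithm succeeds on a \yes-instance with probability at least $e^{-ns}$, and by Lemma~\ref{lem:colorful_solution} it runs in $\Oh(3^{ns}\cdot h(s)\cdot(n+m)^{\Oh(1)})$ time.

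Next I would apply the standard probability-amplification argument. Run the one-shot procedure independently $e^{ns}$ times and return ``yes'' if any run does so. On a \no-instance every run returns ``no'', so the repeated algorithm returns ``no'' with probability $1$. On a \yes-instance the probability that all $e^{ns}$ runs fail is at most $(1-e^{-ns})^{e^{ns}}\leq e^{-1}<1/2$, so the repeated algorithm returns ``yes'' with probability at least $1/2$. The total running time is the per-run time multiplied by $e^{ns}$, giving $\Oh((3e)^{ns}\cdot h(s)\cdot(n+m)^{\Oh(1)})$, matching the stated bound.

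There is no genuine obstacle here: the statement is a routine packaging of the preceding lemma and proposition via the color-coding paradigm, and every quantity ($|S|\leq ns$, the $e^{-ns}$ success probability, the $(1-1/x)^x\le e^{-1}$ amplification inequality) is already in hand. The only point requiring mild care is bookkeeping the $e^{ns}$ factor so the final exponent reads $(3e)^{ns}$ rather than $3^{ns}$, and noting that the one-sided error is preserved under repetition precisely because a \no-instance can never spuriously produce a colorful solution.
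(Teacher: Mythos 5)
Your proposal is correct and follows essentially the same route as the paper: a single random coloring with $ns$ colors, invocation of the dynamic program of Lemma~\ref{lem:colorful_solution}, the $e^{-ns}$ success bound from Proposition~\ref{prop:success-prob}, and amplification by $e^{ns}$ independent repetitions. In fact your amplification step is stated slightly more carefully than the paper's, which writes the failure probability as $\bigl(1-e^{-ns}\bigr)^{ns}$ where the exponent should be $e^{ns}$ (the number of repetitions), exactly as you have it.
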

\end{sloppypar}
\begin{proof}
Let $\X{J}=(\A, \I, \{\util_{a}\}_{a\in \A}, \Co{H}, s,\eta)$ be an instance of \cffag. We color the \objects uniformly at random with colors $[ns]$. Let $\chi\colon V(\Co{H}) \rightarrow [ns]$ be this coloring function. We run the algorithm in \Cref{lem:colorful_solution} on the instance $\X{J}$ with coloring function $\chi$. If the algorithm returns ``yes'', then we return ``yes''. Otherwise, we report failure.  

Let $\X{J}$ be a yes-instance of \cffag and $\phi$ be a hypothetical solution. Due to \Cref{prop:success-prob}, all the \objects in $\phi(\Co{A})$ are colored using distinct colors with probability at least $e^{-ns}$. Thus, the algorithm in \Cref{lem:colorful_solution} returns yes with probability at least $e^{-ns}$. Thus, to boost the success probability to a constant, we repeat the algorithm independently $e^{ns}$ times. Thus, the success probability is at least 
\begin{equation*}
1-\Big(1-\frac{1}{e^{ns}}\Big)^{ns} \geq 1-\frac{1}{e} \geq \frac{1}{2}
\end{equation*}

If the algorithm returns ``yes'', then clearly $\X{J}$ is a yes-instance of \cffag due to \Cref{lem:colorful_solution}. 
\qed \end{proof}

\subsubsection{Deterministic Algorithm}\label{sec:deterministic algo}
We derandomize the algorithm 
using $(p,q)$-perfect hash family to obtain a deterministic algorithm for our problem. 

\begin{defn}[$(p,q)$-perfect hash family]{\rm (\cite{alon1995color})}
For non-negative integers $p$ and $q$, a family of functions $f_1,\ldots,f_t$ from a universe $U$ of size $p$ to a universe of size $q$ is called a $(p,q)$-perfect hash family, if for any subset $S\subseteq U$ of size at most $q$, there exists $i\in [t]$ such that $f_i$ is injective on $S$. 
\end{defn}

We can construct a $(p,q)$-perfect hash family using the following result. 

\begin{proposition}[\cite{naor1995splitters,ParamAlgorithms15b}]\label{prop:hash family construction}
There is an algorithm that given $p,q \geq 1$ constructs a $(p,q)$-perfect hash family of size $e^qq^{\Oh(\log q)}\log p$ in time $e^qq^{\Oh(\log q)}p \log p$.
\end{proposition}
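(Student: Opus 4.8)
The plan is to reconstruct the Naor--Schulman--Srinivasan splitter-based construction, separating the dependence on the universe size $p$ from the dependence on the subset size $q$. I would first recall the notion of an $(n,k,\ell)$-\emph{splitter}: a family $\mathcal{F}$ of functions $[n]\to[\ell]$ such that for every $S\subseteq[n]$ with $|S|=k$ some $f\in\mathcal{F}$ splits $S$ as evenly as possible; when $\ell=k^2$ this is exactly a family in which some $f$ is injective on $S$. The whole construction then factors into two independent pieces: (i) a \emph{universe-reduction} step that replaces $[p]$ by a universe of size polynomial in $q$ while paying only a $\log p$ factor, and (ii) a \emph{small-universe} perfect-hashing step whose size depends on $q$ alone and realises the $e^{q}q^{\Oh(\log q)}$ bound. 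Assembling (i) and (ii) and multiplying the sizes yields the stated $e^{q}q^{\Oh(\log q)}\log p$ family.

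For step (i), I would construct a $(p,q,q^2)$-splitter of size $q^{\Oh(1)}\log p$ using hashing modulo primes. For a fixed $q$-set $S$, the map $x\mapsto(x \bmod \pi)$ separates all pairs of $S$ unless $\pi$ divides one of the $\binom{q}{2}$ pairwise differences; since the product of these differences is at most $p^{\binom{q}{2}}$, it has at most $\Oh(q^2\log p)$ prime factors, so only that many primes are ``bad'' for $S$. Hence a family of $q^{\Oh(1)}\log p$ sufficiently small primes contains, for every $S$ simultaneously, at least one good prime. Composing the resulting modular maps with a secondary hash into $[q^2]$ gives the splitter, and this is the only place where the $\log p$ term and the polynomial-in-$p$ running time $e^{q}q^{\Oh(\log q)}p\log p$ enter.

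For step (ii), on the reduced universe $[q^2]$ I would build a $(q^2,q)$-perfect hash family recursively. Using an $(m,k,\ell)$-splitter I break the hidden $k$-set into $\ell$ blocks of size $\approx k/\ell$, assign each block a disjoint block of colours, and recurse on each block independently; the composed function is injective on $S$ precisely when the outer splitter separates the blocks and each recursive call is injective on its own block. Taking $\ell\approx\sqrt{k}$ (or simply halving) gives $\Oh(\log q)$ levels of recursion, each contributing a multiplicative $q^{\Oh(1)}$ overhead, which accumulates to the factor $q^{\Oh(\log q)}$. The crucial base-$e$ factor, rather than $2^{\Oh(q)}$, comes from the fact that along the recursion the block sizes \emph{partition} $k$, so the product of the per-block ``costs'' telescopes to $e^{\sum_i k_i}=e^{k}$---the same base $e$ that governs the random colouring in Proposition~\ref{prop:success-prob}.

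The main obstacle I anticipate is exactly this last point: driving the base of the exponent down to $e$ while keeping the overhead sub-exponential in $q$. A careless recursion loses a constant factor in the exponent at each of the $\Oh(\log q)$ levels, which would inflate the bound to $2^{\Oh(q)}$ or worse; avoiding this requires choosing the splitter arities so that the block sizes sum to $k$ with only negligible slack, and plugging in near-optimal splitter constructions as black boxes (these are themselves derandomised via the method of conditional expectations, or obtained from explicit error-correcting codes). I would therefore treat the efficient construction of the intermediate splitters as given---it is the technically heaviest ingredient---and concentrate the argument on verifying the two invariants, separation across blocks and injectivity within blocks, that together certify the recursion produces a valid $(p,q)$-perfect hash family within the claimed size and time.
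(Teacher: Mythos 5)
First, a framing remark: the paper never proves this statement; \Cref{prop:hash family construction} is imported as a black box from \cite{naor1995splitters} (see also \cite{ParamAlgorithms15b}), so your proposal must be measured against the construction in those references. Your step (i) is correct and matches it: universe reduction from $[p]$ to a $q^{\Oh(1)}$-sized universe via hashing modulo primes, with the bad primes controlled through the $\binom{q}{2}$ pairwise differences, and this is indeed the only source of the $\log p$ and $p\log p$ factors. The composition argument (injectivity survives composition of the two stages, sizes multiply) is also fine.

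The genuine gap is in step (ii), precisely at the point you yourself flag as the obstacle. In a composed family one must choose one function \emph{per block, independently}, so multiplicative overheads accumulate across \emph{siblings} of your recursion tree, not only along its $\Oh(\log q)$ levels; the claim that each level contributes a single $q^{\Oh(1)}$ factor is therefore false. Concretely, with halving the splitter costs alone give $\prod_{d\ge 0}\bigl(q^{\Oh(1)}\bigr)^{2^d}=2^{\Theta(q)}$, and with $\ell\approx\sqrt{q}$ the tree has $q^{1-o(1)}$ leaves each contributing at least a polynomial factor, again $2^{\Omega(q)}$; either way the base-$e$ bound is destroyed. The construction of \cite{naor1995splitters} is not a deep recursion: after universe reduction it performs a \emph{single} split into $\ell=\Theta(\log q)$ blocks of size about $q/\log q$, using the brute-force family of all interval partitions of $[q^2]$ (all choices of $\ell-1$ thresholds), which has size $\binom{q^2+\ell}{\ell-1}=q^{\Oh(\log q)}$; each block is then handled by a \emph{near-optimal} $(q^2,\lceil q/\log q\rceil)$-perfect hash family of size $e^{q/\log q}\cdot q^{\Oh(1)}$, found by the greedy/conditional-expectations derandomization of the probabilistic existence proof, which is feasible exactly because the block size is $q/\log q$ over a universe of size $q^2$: the greedy runs in time roughly $\binom{q^2}{q/\log q}\cdot e^{q/\log q}\cdot q^{\Oh(1)}=2^{q(1+o(1))}\le e^q q^{\Oh(\log q)}$. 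The total size is $q^{\Oh(\log q)}\cdot\bigl(e^{q/\log q}q^{\Oh(1)}\bigr)^{\Theta(\log q)}=e^qq^{\Oh(\log q)}$: the exponentials telescope exactly as you intend, but the per-block polynomial losses stay at $q^{\Oh(\log q)}$ only because there are merely $\Theta(\log q)$ blocks. This choice of $\ell$ is forced, not incidental: fewer blocks make the greedy step infeasible (time $\binom{q^2}{q/\ell}$ overshoots $e^q$), more blocks make the threshold splitter exceed $q^{\Oh(\log q)}$. Finally, you propose to treat ``near-optimal splitter constructions'' as black boxes; this conflates two objects and begs the question. The splitters used are elementary (primes and thresholds); the technically essential ingredient is the greedy construction of near-optimal \emph{block perfect hash families}, and it is there---via the probability $b!/b^b\approx e^{-b}$ that a random function is injective on a fixed $b$-set---that the base $e$ actually enters.
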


Let $\X{J}=(\A, \I, \{\util_{a}\}_{a\in \A}, \Co{H}, s,\eta)$ be an instance of \cffag. Instead of taking a random coloring $\chi$, we construct an $(m,ns)$-perfect hash family $\Co{F}$ using \Cref{prop:hash family construction}. Then, for each function $f\in \Co{F}$, we invoke the algorithm in \Cref{lem:colorful_solution} with the coloring function $\chi=f$. If there exists a feasible assignment $\phi \colon \Co{A} \rightarrow 2^{\Co{I}}$ such that $|\phi(a)|\leq s$, for all $a\in \Co{A}$, then there exists a function $f\in \Co{F}$ that is injective on $\phi(\Co{A})$, since $\Co{F}$ is an $(m,ns)$-perfect hash family. Consequently, due to \Cref{lem:colorful_solution}, the algorithm return ``yes''. Hence, we obtain the following deterministic algorithm.

\begin{theorem}\label{thm:fpt-ns}
There exists a deterministic algorithm for  \cffag running in time 
$\Oh((3e)^{ns}\cdot (ns)^{\log ns} \cdot h(s)\cdot (n+m)^{\Oh(1)})$.
\end{theorem}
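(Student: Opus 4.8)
The plan is to replace the single random coloring used in Theorem~\ref{thm:randomized_algo} by a family of deterministic colorings drawn from an $(m,ns)$-perfect hash family, so that success is guaranteed for at least one member of the family rather than merely with positive probability. First I would construct, via Proposition~\ref{prop:hash family construction} with the parameters $p=m$ and $q=ns$, an $(m,ns)$-perfect hash family $\Co{F}$ of functions from $\Co{I}$ (identified with $[m]$) to the color set $[ns]$. Then, for each $f\in\Co{F}$, I would run the dynamic-programming routine of Lemma~\ref{lem:colorful_solution} using $\chi=f$ as the coloring, and output ``yes'' as soon as any of these runs succeeds, reporting ``no'' only if every run fails.

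For correctness, the forward direction is immediate: whenever some run returns ``yes'', Lemma~\ref{lem:colorful_solution} certifies a genuine colorful feasible assignment, which is in particular a feasible assignment, so $\X{J}$ is a yes-instance. For the reverse direction, suppose $\X{J}$ is a yes-instance with hypothetical solution $\phi$, and recall that $|\phi(\Co{A})|\le ns$. By the defining property of an $(m,ns)$-perfect hash family applied to the set $S=\phi(\Co{A})$, whose size is at most $q=ns$, there is some $f\in\Co{F}$ that is injective on $\phi(\Co{A})$. Under this coloring the assignment $\phi$ is colorful, so by Lemma~\ref{lem:colorful_solution} the corresponding run returns ``yes''. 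This is precisely the argument sketched just above the statement, now made deterministic by replacing the probabilistic ``distinct colors'' event with the combinatorial injectivity guarantee of the hash family.

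Finally I would bound the running time by multiplying the number of colorings tried by the cost of a single dynamic-programming run. By Proposition~\ref{prop:hash family construction}, we have $|\Co{F}|=e^{ns}(ns)^{\Oh(\log ns)}\log m$, and the family is built in time $e^{ns}(ns)^{\Oh(\log ns)}m\log m$, which is dominated by the subsequent enumeration. Each of the $|\Co{F}|$ invocations of Lemma~\ref{lem:colorful_solution} costs $\Oh(3^{ns}\cdot h(s)\cdot (n+m)^{\Oh(1)})$. Multiplying, the total is $e^{ns}(ns)^{\Oh(\log ns)}\cdot 3^{ns}\cdot h(s)\cdot (n+m)^{\Oh(1)}$; absorbing the $\log m$ factor into $(n+m)^{\Oh(1)}$ and combining $e^{ns}3^{ns}=(3e)^{ns}$ yields the claimed bound $\Oh((3e)^{ns}\cdot (ns)^{\log ns}\cdot h(s)\cdot (n+m)^{\Oh(1)})$.

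The step requiring the most care is the correct instantiation of the hash-family parameters, namely $p=m$ for the universe of \items and $q=ns$ for the bound on the solution size (equivalently, the number of colors), together with verifying that ``$f$ injective on $\phi(\Co{A})$'' is exactly the deterministic analogue of the event whose probability drove the randomized algorithm. Beyond this identification, the argument is a routine derandomization combined with straightforward running-time accounting, since the hard combinatorial content already resides in Lemma~\ref{lem:colorful_solution}.
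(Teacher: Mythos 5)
Your proposal is correct and follows essentially the same route as the paper: construct an $(m,ns)$-perfect hash family via Proposition~\ref{prop:hash family construction}, run the dynamic program of Lemma~\ref{lem:colorful_solution} for each coloring in the family, and use the injectivity guarantee on $\phi(\Co{A})$ (of size at most $ns$) to replace the probabilistic distinct-coloring event. Your running-time accounting, multiplying $e^{ns}(ns)^{\Oh(\log ns)}\log m$ colorings by the $\Oh(3^{ns}\cdot h(s)\cdot(n+m)^{\Oh(1)})$ cost per run, matches the paper's bound exactly.
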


Due to \Cref{thm:fpt-ns}, we can conclude the following. 

\begin{corollary}\label{cor1:fpt-ns}
If \mwisg is solvable in polynomial time, then there exists a deterministic algorithm for  \cffag running in time 
$\Oh((3e)^{ns}\cdot (ns)^{\log ns} \cdot (n+m)^{\Oh(1)})$. 
%
%
\end{corollary}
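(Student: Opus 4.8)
The plan is to read off Corollary~\ref{cor1:fpt-ns} as the immediate specialization of Theorem~\ref{thm:fpt-ns} to the case where the auxiliary running-time function $h$ is polynomial. Theorem~\ref{thm:fpt-ns} already asserts a deterministic algorithm for \cffag running in time $\Oh((3e)^{ns}\cdot (ns)^{\log ns}\cdot h(s)\cdot(n+m)^{\Oh(1)})$, where $h(k)n^{\Oh(1)}$ is the running time of the assumed \fpt algorithm $\mathbb{B}$ for \mwisg (this is exactly where $h$ enters, via the subroutine that evaluates $\mathbb{I}(G)$ in Lemma~\ref{lem:colorful_solution}). So the only thing to observe is what the hypothesis ``\mwisg is solvable in polynomial time'' does to the factor $h(s)$.

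First I would note that if \mwisg is solvable in polynomial time, then the subroutine $\mathbb{B}$ used to compute $\mathbb{I}(\Co{H}_{a,S})$ runs in time $(n+m)^{\Oh(1)}$ on every graph it is called on (each such graph is an induced subgraph of $\Co{H}$, and by heredity of $\mathcal G$ it again belongs to $\mathcal G$, so the polynomial-time algorithm applies). In the notation of Theorem~\ref{thm:fpt-ns} this means we may take $h(s)$ to be a polynomial in the input size rather than an arbitrary computable function of $s$; equivalently $h(s)=1$ after absorbing the polynomial factor into the $(n+m)^{\Oh(1)}$ term. Substituting this into the running-time bound of Theorem~\ref{thm:fpt-ns} collapses the $h(s)$ factor and yields exactly the claimed bound $\Oh((3e)^{ns}\cdot(ns)^{\log ns}\cdot(n+m)^{\Oh(1)})$.

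I would then write the argument in a single short paragraph: invoke Theorem~\ref{thm:fpt-ns}, point out that the hypothesis lets us instantiate $\mathbb{B}$ as a polynomial-time routine so that the term $h(s)$ is absorbed into the polynomial factor, and conclude the stated time bound. There is essentially no obstacle here, since the work is entirely carried by Theorem~\ref{thm:fpt-ns} and Lemma~\ref{lem:colorful_solution}; the corollary is a bookkeeping restatement. The only point that warrants a sentence of care is the heredity remark above: the subroutine is called on the graphs $\Co{H}_{a,S}=\Co{H}[V_S]$, which are induced subgraphs of $\Co{H}\in\mathcal G$, so that the polynomial-time solvability of \mwisg genuinely applies to every subroutine call. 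With that justification in place the bound follows by direct substitution.
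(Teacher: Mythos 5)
Your proposal is correct and matches the paper's own treatment: the paper derives Corollary~\ref{cor1:fpt-ns} directly from Theorem~\ref{thm:fpt-ns} by observing that polynomial-time solvability of \mwisg lets the factor $h(s)$ be absorbed into the $(n+m)^{\Oh(1)}$ term. Your added remark about heredity (so that the subroutine legitimately applies to each induced subgraph $\Co{H}_{a,S}$) is a point the paper leaves implicit, but it is exactly the right justification and does not change the argument.
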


It is possible that {\sc MWIS} is polynomial-time solvable on $\mathcal G$, but  \mwisg is \npc, as {\em any} $k$-sized solution of \mwis need not satisfy the weight constraint in the \mwisg problem. However, when we use an algorithm 
$\mathbb{B}$ for  \mwisg  in our algorithm, we could have simply used an algorithm for {\sc MWIS}. Though this will not result in a size bound of $s$ on the size of an independent set of weight at least $\eta$ that we found, however, this is sufficient to solve \cffa. However, we need to use \mwisg, when  {\sc MWIS} is \npc and we wish to use \fpt algorithm with respect to $k$. 
%
%
Due to \Cref{thm:fpt-ns} and this observation, \cffa is \fpt when parameterized by $n+s$ for several graph classes, such as chordal graphs~\cite{golumbic2004algorithmic},  bipartite graphs~\cite{golumbic2004algorithmic}, $P_6$-free graphs~\cite{grzesik2022polynomial}, outerstring graph~\cite{keil2017algorithm}, and fork-free graph~\cite{lozin2008polynomial}.

\begin{remk}
Our algorithm for chordal graphs is an improvement over the known algorithm that runs in $\Oh(m^{n+2}(Q+1)^{2n})$ time, where $Q=\max_{a\in \Co{A}}\sum_{i\in \Co{I}}p_{a}(i)$~{\rm \cite{DBLP:conf/iwoca/ChiarelliKMPPS20}}.
\end{remk}

\subsection{\fpt Algorithms for \mwisg when $\mathcal G$ is $f$-ifc 
}
%
In this section, we prove \Cref{lem:mwis-degenerate}. Let $(G,k,\rho,w)$ be a given instance of \mwisg. Further, $\mathcal G$ is  $f$-ifc. 
Let ${\sf HighWeight} = \{v\in V(G) \colon w(v) \geq \nicefrac{\rho}{k}\}$. Note that if there exists an independent set of $G[{\sf HighWeight}]$ of size $k$, then it is the solution of our problem. Since, $G$ belongs to $f$-ifc, $G[{\sf HighWeight}]$ is also $f$-ifc. Thus, there exists an independent set in $G[{\sf HighWeight}]$ of size at least  $f(|{\sf HighWeight}|)$.  
If $f(|{\sf HighWeight}|) \geq k$, then there exists a desired solution. To find a solution we do as follow. Consider an arbitrary  set 
$X \subseteq {\sf HighWeight}$ of size $f^{-1}(k)$. The size of $X$  guarantees that the set $X$ also has a desired solution. Now  we enumerate  subsets of size $k$ of $X$ one by one, and check whether it is independent; and if independent return it. This concludes the proof. 
 Otherwise, $|{\sf HighWeight}|< f^{-1}(k)$.
Note that the solution contains at least one vertex of ${\sf HighWeight}$. Thus, we guess a vertex, say $v$, in the set ${\sf HighWeight}$ which is in the solution,  delete $v$ and its neighbors from $G$, and decrease $k$ by $1$. Repeat the algorithm on the instance $(G-N[v],k-1,\rho-w(v),w\lvert_{V(G-N[v])})$. 

Since the number of guesses at any step of the algorithm is at most $f^{-1}(k)$ and the algorithm repeats at most $k$ times, the running time of the algorithm is $\Oh((f^{-1}(k))^k\cdot(n+m)^{\Oh(1)})$.

\begin{corollary}\label{cor:fpt-k-mwisg}
There exists an algorithm that solves \mwisg in $\Oh((2k)^k\cdot(n+m)^{\Oh(1)})$, $\Oh((4 k^2)^k\cdot(n+m)^{\Oh(1)})$, $\Oh((4k)^k\cdot(n+m)^{\Oh(1)})$, $\Oh((dk+k)^k\cdot(n+m)^{\Oh(1)})$, $\Oh(R(\ell,k)^k\cdot(n+m)^{\Oh(1)})$  time, when $\Co{G}$ is a family of bipartite graphs, triangle free graphs, planar graphs, $d$-degenerate graphs, graphs excluding $K_\ell$ as an induced graphs, respectively. Here, $R(\ell,k)$ is an upper bound on Ramsey number. 
\end{corollary}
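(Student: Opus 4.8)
The plan is to derive Corollary~\ref{cor:fpt-k-mwisg} as a direct instantiation of \Cref{lem:mwis-degenerate}, by identifying the appropriate monotone invertible function $f$ for each of the five graph families and then substituting $f^{-1}(k)$ into the running time bound $\Oh((f^{-1}(k))^k\cdot(n+m)^{\Oh(1)})$. Since the paper has already established that each of these families is $f$-independence friendly (the relevant witnesses are sketched in the discussion preceding \Cref{lem:mwis-degenerate}), the entire content reduces to computing the inverse of each density function and plugging it in. So the overarching strategy is: (i) recall that the family in question is hereditary, (ii) state the guaranteed independent-set size $f(n)$ as a function of $n$, (iii) invert to get $f^{-1}(k)$, and (iv) read off the running time.

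Concretely, I would proceed family by family. For bipartite graphs, every $n$-vertex graph has an independent set of size at least $n/2$, so $f(n)=n/2$ and $f^{-1}(k)=2k$, giving $(2k)^k$. For triangle-free graphs, the classical bound (Ramsey/Turán-type, e.g.\ $R(3,k)=\Theta(k^2/\log k)$, but a clean usable lower bound is that an $n$-vertex triangle-free graph has an independent set of size $\Omega(\sqrt{n})$, so one may take $f(n)=\sqrt{n}/2$ or similar) yields $f^{-1}(k)$ of order $k^2$, producing the stated $(4k^2)^k$. For planar graphs, the four-color theorem (or even just five-colorability) guarantees an independent set of size at least $n/4$, so $f(n)=n/4$ and $f^{-1}(k)=4k$, giving $(4k)^k$. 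For $d$-degenerate graphs, greedy coloring with $d+1$ colors forces an independent set of size at least $n/(d+1)$, hence $f(n)=n/(d+1)$ and $f^{-1}(k)=(d+1)k=dk+k$, matching $(dk+k)^k$. For $K_\ell$-free (induced) graphs, the Ramsey bound tells us that any graph on $R(\ell,k)$ vertices contains either a $K_\ell$ or an independent set of size $k$; since a $K_\ell$-free graph cannot contain the former, once $n\geq R(\ell,k)$ we are guaranteed an independent set of size $k$, which is exactly the threshold the algorithm of \Cref{lem:mwis-degenerate} needs, yielding the $R(\ell,k)^k$ bound.

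In each case I would invoke \Cref{lem:mwis-degenerate} with the stated $f$ after verifying the two hypotheses it requires, namely hereditariness and the independent-set density guarantee; both are standard and already flagged in the text preceding the lemma. The running-time expressions then follow by substituting $f^{-1}(k)$ into $(f^{-1}(k))^k\cdot(n+m)^{\Oh(1)}$.

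The one step requiring genuine care, rather than routine substitution, is the triangle-free case: unlike the others, the density guarantee there is not a simple linear fraction of $n$ but a sublinear ($\Theta(\sqrt{n})$-type) bound, so one must be precise about the constant in $f(n)$ to justify that its inverse is bounded by the claimed $4k^2$, and one should confirm monotonicity and invertibility of that $f$ to meet the lemma's requirement that $f$ be monotonically increasing and invertible. I expect this to be the main (and essentially only) obstacle; for the remaining four families the invertibility is trivial since $f$ is linear, and the substitution is immediate.
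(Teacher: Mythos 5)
Your proposal is correct and takes essentially the same route as the paper: the corollary is just Theorem~\ref{lem:mwis-degenerate} instantiated with $f(n)=n/2$ (bipartite), $f(n)=\sqrt{n}/2$ (triangle-free), $f(n)=n/4$ (planar, via the four-color theorem), $f(n)=n/(d+1)$ ($d$-degenerate), and the Ramsey threshold $R(\ell,k)$ for $K_\ell$-free graphs, exactly as you describe, and your handling of the one delicate case (triangle-free, where $f$ is sublinear but still monotone and invertible with $f^{-1}(k)=4k^2$) is sound. The only nit: your parenthetical claim that five-colorability already yields an independent set of size $n/4$ is off (it gives $n/5$, hence $(5k)^k$), so the stated $(4k)^k$ bound genuinely needs the four-color theorem, which is what your main argument uses anyway.
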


\subsection{A polynomial time algorithm for \mwisg when $\mathcal G$ is a cluster graph}
In this section, we design a polynomial time algorithm. Let $\X{J}=(G,k,\rho,w)$ be a given instance of \mwisg. From each clique, we pick a vertex of highest weight. Let $X$ be the set of these vertices. Let $S\subseteq X$ be a set of $k$ vertices of $X$ that has highest weight. We return ``yes'' if $w(S)\geq \eta$, otherwise, ``no''. Next, we argue the correctness of the algorithm. If we return ``yes'', then, clearly, $S$ is an independent set of size at most $k$ and weight at least $\eta$. In the other direction, suppose that $Z$ is a solution to $\X{J}$. Suppose that $Z$ picks elements from the cliques $C_1,C_2,\ldots,C_\ell$, $\ell\leq k$. If $Z$ does not pick highest weight vertex from $C_j$, for some $j\leq l$, then we can replace the vertex $v=Z\cap C_j$ with highest weight vertex of $C_j$ in $Z$, and it is still a solution. Note that if $S\cap C_j = \emptyset$, where $j\leq \ell$, then $S$ contains a vertex whose weight is at least the weight of $v=S\cap C_j$ due to the construction of $S$. Since $|S|\geq |Z|$, we have a unique vertex for every such $j$. Thus, $w(S)\geq w(Z)\geq \eta$, and hence, the algorithm returns ``yes''.

\hide{
Towards designing our algorithm (proof of Theorem~\ref{lem:mwis-degenerate}), we use the tool of {\em $k$-independence covering family} designed in~\cite{lokshtanov2020covering}. 

\begin{defn}[$k$-Independence Covering Family]\label{defn:independence covering family}{\rm (\cite{lokshtanov2020covering})}
For a graph $G$ and a positive integer $k$, a family of independent sets of $G$, $\Co{F}(G,k)$, is called an independence covering family for $(G,k)$, if for any independent set $X$ of $G$ of size at most $k$, there exists a set $Y\in \Co{F}(G,k)$ such that $X\subseteq Y$. 
\end{defn}

For $d$-degenerate graphs, $k$-independence covering family can be enumerated using the following. 

\begin{proposition}[\cite{lokshtanov2020covering}]\label{prop:independence covering family}
There is an algorithm that given a $d$-degenerate graph $G$ and a positive integer $k$ outputs a $k$-independence covering family for $(G,k)$ of size at most $\binom{k(d+1)}{k}\cdot 2^{o(k(d+1))}\cdot \log n$ in  $\Oh(\binom{k(d+1)}{k}\cdot 2^{{o}(k\cdot(d+1))}\cdot (n+m) \log n)$ time, where $n,m$ denote the number of vertices and edges, respectively, in $G$.
\end{proposition}

Using \Cref{prop:independence covering family}, we get the following algorithm. 

\begin{proof}[Proof of Theorem~\ref{lem:mwis-degenerate}]
We first construct a $k$-independence covering family of $G$, denoted by $\Co{F}(G,k)$, using \Cref{prop:independence covering family}. Next, for every $Y\in \Co{F}$, we choose a $k$-sized subset of maximum weight, say $X_Y$. We return $\arg \max_{Y\in \Co{F}} w(X_Y)$, where $w(X_Y)$ is the sum of weights of the vertices in $X_Y$.  
The correctness follows from the fact that the solution is contained in one of the set in $Y\in \Co{F}$. The running time follows from \Cref{prop:independence covering family}.
\qed \end{proof}}

\section{Distance From Chaos: Examining possible structures of the conflict graph}\label{sec:distance-param}


Starting point of our results in this section is the polynomial-time algorithm for \cffa when \conflict graph is a complete graph (there is an edge between every pair of vertices). We give proof of Theorem~\ref{thm:poly-clique}. We begin with a simple observation,  which follows due to the fact that the maximum size of an independent set in clique is $1$. 
\begin{obs}\label{bundlesize-clique}
If the \conflict graph is a complete graph, then the bundle size is $1$. 
\end{obs}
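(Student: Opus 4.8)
The plan is to reduce the statement immediately to a structural property of complete graphs, so that it follows from the definitions alone. Recall that in \cffa every bundle $\phi(a)$ is required to induce an independent set in the conflict graph \Co{H}. Hence the entire claim boils down to bounding the maximum size of an independent set in a complete graph, and the allocation constraints play no further role.

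First I would invoke the definition of an independent set as a vertex subset no two of whose members are adjacent. Since a complete graph on the vertex set \Co{I} contains an edge between every pair of distinct vertices, any two distinct vertices are adjacent; therefore no subset of size two or more can be independent. This forces every independent set of \Co{H} to contain at most one vertex, and in particular every bundle $\phi(a)$ satisfies $|\phi(a)| \leq 1$. Thus the maximum attainable bundle size is $1$, as claimed.

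There is essentially no obstacle here: the observation is an immediate consequence of the definitions, and the only point worth spelling out is that the stated bound ``$1$'' refers to the largest possible bundle size, since an agent may of course still be assigned the empty bundle or a single \object. This is precisely the structural fact that will make Theorem~\ref{thm:poly-clique} tractable, as it collapses the allocation problem to deciding which single \object (if any) to hand to each agent.
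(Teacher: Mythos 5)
Your proof is correct and is exactly the paper's argument: the paper dispatches this observation by noting that the maximum size of an independent set in a clique is $1$, which is precisely your reasoning that any two distinct vertices of a complete graph are adjacent, so no bundle (being an independent set in \Co{H}) can contain two \items. Your added remark that ``$1$'' is an upper bound (empty bundles being permitted) is a fair clarification but does not change the argument.
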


\begin{proof}[of Theorem~\ref{thm:poly-clique}]
Let $\X{J}=(\A, \I, \{\util_{a}\}_{a\in \A}, \Co{H}, \eta)$  be a given instance of \cffa. In light of Observation~\ref{bundlesize-clique}, we construct the following auxiliary bipartite graph $G=(L,R)$ as follows: for every agent $a\in \Co{A}$, we add a vertex $a$ in $L$ and for every \object $x\in \Co{I}$, we add a vertex $x$ in $R$.  If $p_a(x) \geq \eta$, then add an edge $ax$ in $G$. Next, we find a maximum matching $M$ in $G$. If $M$ does not saturate $L$, then return ``no'', otherwise return the following function $\phi \colon \Co{A}\rightarrow 2^{\Co{I}}$, $\phi(a)=M(a)$. 

Next, we prove the correctness of the algorithm. Clearly, if we return the function $\phi$, then it is a solution as we have an edge $ax$ if and only if  $p_a(x) \geq \eta$ and $M$ is a matching in $G$ saturating $L$. Next, we prove that if $\X{J}$ is a yes-instance of \cffa, then the algorithm returns a function. Let $\phi$ be a solution to $\X{J}$. Since $p_a(\phi(a)) \geq \eta$, we have an edge $a\phi(a)$ in $G$. Thus, there exists a matching in $G$ saturating $L$. Hence, the maximum matching $M$ in $G$ saturates $L$ and the algorithm returns a function. 
\qed \end{proof}
\subsection{When conflict is highly localized: conflict graph is a cluster graph}\label{ss:cluster-proof}
%

As discussed in the Introduction, there can be scenarios, where the incompatibilities are highly localized in a way that the set of \items can be decomposed into these small chunks where there is incompatibilities between all \items in the same chunk and none between \items in different chunks. Such a scenario is captured by a cluster graph


We show that the problem is intractable for cluster graph  even when it consists of $3$ cliques. This is in contrast to \Cref{thm:poly-clique}. To show the {\sf NP}-hardness, we give a polynomial time reduction from {\sc Numerical 3-dimensional Matching} problem, which is known to be \nph~\cite{garey1979computers}. In 
\NumMatch\ problem, we are given three disjoint sets $X$, $Y$, and $Z$, each containing $\tilde{m}$ elements, a size $s(a) \in \mathbb{Z}_{+}$ 
for each element $a\in X\cup Y\cup Z$,  and a bound $B\in \mathbb{Z}_{+}$. The goal is to partition $X\cup Y\cup Z$ into $\tilde{m}$ disjoint sets $A_1,\ldots, A_{\tilde{m}}$ such that (i) each $A_{i}$, where $i\in [\tilde{m}]$, contains exactly one element from each of $X$, $Y$, and $Z$, and (ii) for each $i\in [\tilde{m}]$, $\sum_{a\in A_{i}}s(a) = B$. Note that it follows that $\sum_{i\in [\tilde{m}]}\sum_{a\in A_{i}}s(a) = \tilde{m}B$.  Next, we give the  desired reduction.

\begin{proof}[of Theorem~\ref{thm:hardnessForcluster}]  Given an instance $\X{J}=(X, Y, Z, \{s_a\}_{a\in X\cup Y \cup Z},B)$ of the \NumMatch problem, we create an instance $\X{J'}=(\Co{A}, \Co{I}, \{\util_{a}\}_{a\in \Co{A}}, \Co{H}, \eta=B)$ of \cffa, where $\Co{I}=X\cup Y\cup Z$ and \Co{H} is a cluster graph on the vertex set \Co{I} with induced cliques on the vertices in the set $X$, $Y$, and  $Z$. We define a set of agents $\Co{A}=\{a_{1}, \ldots, a_{\tilde{m}}\}$ and for each agent $a_{i}\in \Co{A}$, we define the utility function $\util_{a_{i}}(j) = s(j)$ for each \object $j\in \Co{I}$. 
\begin{clm} \X{J} is a \yes-instance of \NumMatch if and only if \X{J'} is a \yes-instance of \cffa

\end{clm}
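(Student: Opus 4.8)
The plan is to hang the entire argument on the single structural fact that governs this reduction: in the cluster graph $\Co{H}$, whose three cliques are exactly $X$, $Y$, and $Z$, every independent set contains at most one vertex from each clique. Hence any bundle $\phi(a)$ has size at most three and draws at most one \object from each of $X$, $Y$, $Z$. First I would record this observation, since both directions rely on it.

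For the forward direction, given a \NumMatch partition $A_1,\ldots,A_{\tilde{m}}$, I would simply set $\phi(a_i)=A_i$. Each $A_i$ picks exactly one element from each of $X$, $Y$, $Z$, so it induces no edge in $\Co{H}$ and is independent; and since $\util_{a_i}(j)=s(j)$, the bundle's utility equals $\sum_{j\in A_i}s(j)=B=\eta$. The bundles are pairwise disjoint because the $A_i$ partition $\Co{I}$, so $\phi$ is a feasible assignment and $\X{J}'$ is a \yes-instance.

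The substantive direction is the reverse, and the engine is a global counting argument. Starting from a solution $\phi$ of $\X{J}'$, I would sum the per-agent utility bound $\sum_{j\in\phi(a_i)}s(j)\ge\eta=B$ over all $\tilde{m}$ agents and use disjointness of the bundles to obtain
\[
\tilde{m}B \;\le\; \sum_{i\in[\tilde{m}]}\sum_{j\in\phi(a_i)}s(j) \;=\; \sum_{j\in\bigcup_i \phi(a_i)}s(j) \;\le\; \sum_{j\in\Co{I}}s(j) \;=\; \tilde{m}B,
\]
where the final equality is the standing hypothesis $\sum_{a\in X\cup Y\cup Z}s(a)=\tilde{m}B$. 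All inequalities are therefore equalities, which forces (i) each agent's bundle to have utility exactly $B$, and (ii), since every size is positive, $\bigcup_i\phi(a_i)=\Co{I}$, i.e.\ every \object is assigned. Now all $3\tilde{m}$ \items are distributed among $\tilde{m}$ pairwise disjoint bundles each of size at most three, so a pigeonhole count forces every bundle to have size exactly three; by the structural observation this means exactly one element from each of $X$, $Y$, $Z$. Setting $A_i=\phi(a_i)$ then yields the required \NumMatch partition, with $\sum_{j\in A_i}s(j)=B$ for each $i$.

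I expect the counting step to be the only delicate point. One must invoke positivity of the sizes to upgrade ``the total assigned size equals the total available size'' into ``every \object is assigned,'' and then combine this with the per-clique independence constraint to pin down that each bundle contains exactly one element from each part. The forward direction and the structural observation are routine.
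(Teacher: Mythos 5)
Your proposal is correct and follows essentially the same route as the paper: the forward direction assigns $\phi(a_i)=A_i$ directly, and the reverse direction uses the same global counting argument (the paper phrases it as a contradiction—if any bundle exceeded $B$ the total would exceed $\tilde{m}B$—where you phrase it as a sandwich of inequalities forced to equality) followed by the same pigeonhole step using the at-most-one-vertex-per-clique structure of independent sets in $\Co{H}$. Your version is in fact slightly more careful than the paper's, since you explicitly invoke positivity of the sizes to conclude that every \object is assigned, a point the paper passes over tersely.
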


\begin{proof}Suppose that \X{J} is a \yes-instance of \NumMatch. Then, there is a solution $A_{i}, i\in [\tilde{m}]$, that satisfies the desired properties. It follows then that for the agent $a_{i}\in \Co{A}$, we have the condition that $\sum_{j\in A_{i}}\util_{a_{i}}(j) = \sum_{j\in A_{i}}s(j) =B$. Thus, the assignment function $\phi$, where $\phi(a_i)=A_i$, 
yields a solution for \X{J'} as well due to the construction of $\Co{H}$. 

Conversely, suppose that we have a solution for \X{J'}, i.e., an assignment function $\phi$ for which $\phi(a_{i}) \cap \phi(a_{j}) = \emptyset$ for every $\{a_{i}, a_{j}\} \sse \Co{A}$, and for every $a_{i}\in \Co{A}$, $\phi(a_i)$ is an independent set and $\sum_{j \in \phi(a_{i})} \util_{a_{i}}(j) \geq B$. 

Suppose that there exists an agent $a_{i}$ whose bundle $\sum_{j \in \phi(a_{i})} \util_{a_{i}}(j) > B$, then taking all the $\tilde{m}$ bundles together we note that $\sum_{i\in [\tilde{m}]}\sum_{j \in \phi(a_{i})} \util_{a_{i}}(j) > \tilde{m}B$, contradiction to the definition of the problem. Hence, we know that for each agent $a_{i}\in \Co{A}$, $\sum_{j \in \phi(a_{i})} \util_{a_{i}}(j) =B$. Since there are $\tilde{m}$ agents, every \object is assigned to some agent. Furthermore, since $\phi(a_i)$ is an independent set in $\Co{H}$, for each $i\in [\tilde{m}]$, we know that $\phi(a_i)$ contains $3$ elements, one from each clique. Hence, setting $A_{i}= \phi(a_{i})$ gives a solution to \X{J}. 
\qed \end{proof}
This completes the proof. \qed \end{proof}
%
Next, we show that when the cluster graph has only two cliques and the utility functions are uniform, i.e., for any two agents $a,a'$, $u_a=u_{a'}$, then \cffa can be solved in polynomial time. In particular, we prove \Cref{thm:2cliques}. For arbitrary utility functions, the complexity is open. 
We begin by noting that due to the utility functions being uniform, every bundle is valued equally by every agent. This allows us to look at the problem purely from the perspective of partitioning the \items into $n$ bundles of size at most two.



\begin{proof}[of \Cref{thm:2cliques}]
Let $\X{J}=(\A, \I, \{\util_{a}\}_{a\in \A}, \Co{H}, \eta)$  be a given instance of \cffa. Since the utility functions are uniform, we skip the agent identification from the subscript of utility function, i.e., instead of writing $\util_{a}$ for the utility function of agent $a$, we will only use $\util$. 

We note that if there exists an \object $z$ such that $\util(z) \geq \eta$, then there exists a  solution that assign it to some agent. Since the utility functions are uniform, it can be assigned to any agent. Let $\Co{I}_{\sf HighUtility}\subseteq \Co{I}$ be the set of \items whose utility is at least $\eta$, i.e., $\Co{I}_{\sf HighUtility} =\{z\in \Co{I} \colon \util(z)\geq \eta\}$. Let $\Co{I}_{\sf LowUtility}= \Co{I}\setminus \Co{I}_{\sf HighUtility}$. If  $|\Co{I}_{\sf HighUtility}|  \geq n$, then every agent get an \object from the set  $\Co{I}_{\sf HighUtility}$, and it is a solution. Otherwise, there are $|\Co{A}|-|\Co{I}_{\sf HighUtility}|$ agents to whom we need to assign bundles of size two. Let ${\sf IS}$ denote the set of all independent sets of size two in $\Co{H}[\Co{I}_{\sf LowUtility}]$. Thus, ${\sf IS}$ has size at most $m^2$. 

Next, we construct a graph, denoted by  $\Co{\widehat{H}}$,  on the \items in $\Co{I}_{\sf LowUtility}$ where there is an edge between vertices $a$ and $b$ if $\{a, b\} \in {\sf IS}$ and 
$u(a) + u(b)\geq \eta$. In this graph we compute a maximum sized matching, denoted by \Co{M}. If its size is less than  $n-|\Co{I}_{\sf HighUtility}|$, then we return the answer ``no''. Otherwise, we return answer ``yes'' and create an assignment as follows: if $(a,b) \in \Co{M}$, then we have a bundle containing $\{a,b\}$. We create exactly $n - |\Co{I}_{\sf HighUtility}|$ such bundles of size two and discard the others. These bundles along with the singleton bundles from $\Co{I}_{\sf HighUtility}$ yield our assignment for $n$ agents.

Clearly, this graph has $m$ vertices and at most $m^2$ edges. Thus, the maximum matching can be found in polynomial time. Next, we prove the correctness of the algorithm.

\noindent{\bf Correctness:} If the algorithm returns an assignment of \items to the agents, then clearly, for every agent the utility from the bundle is at least $\eta$. Every bundle is also an independent set in $\Co{H}$. Moreover, if a bundle is of size one, then the singleton \object is clearly an element of the set  $\Co{I}_{\sf HighUtility}$; otherwise, the bundle represents an independent set of size two in {\sf IS} whose total utility is 
 at least $\eta$. There are $n$ bundles in total,  exactly $|\Co{I}_{\sf HighUtility}| $ bundles of size one and at least $n - |\Co{I}_{\sf HighUtility}|$ bundles of size two. 
 
 In the other direction, suppose that $\phi$ is a solution to $\X{J}$. Let $a$ be an agent whose bundle size is two and $\phi(a)$ contains at least one \object from $\Co{I}_{\sf HighUtility}$, say $z$. Update the assignment $\phi$ as follows: $\phi(a)=\{z\}$. Note that $\phi$ is still a solution to $\X{J}$.   Let $\Co{A}_{1} \subseteq \Co{A}$ be the set of agents such that for every agent $a\in \Co{A}_{1}$, $|\phi(a)|=1$, i.e., the bundle size assigned to every agent in $\Co{A}_{1}$ is $1$. Clearly, $\phi(\Co{A}_{1})\subseteq \Co{I}_{\sf HighUtility}$. Let ${\sf rem}= \Co{I}_{\sf HighUtility}\setminus \phi(\Co{A}_{1})$, the set of unassigned ``high value'' \items. Suppose that ${\sf rem}\neq \emptyset$. 

Let $\Co{A}'' \subseteq \Co{A}\setminus \Co{A}_{1}$ be a set of size $\min\{|\Co{A}\setminus \Co{A}_{1}|,|{\sf rem}|\}$. Let $\Co{A}'' =\{a_1,\ldots,a_\ell\}$ and ${\sf rem}=\{z_1,\ldots,z_q\}$, where clearly $\ell \leq q$. 
Update the assignment $\phi$ as follows: for every $i\in [\ell]$, $\phi(a_i)=\{z_i\}$. Clearly, $\phi$ is still a solution of $\X{J}$. 
We note that there are only two cases: either $\Co{A}=\Co{A}_{1}\cup \Co{A}''$ or $\Co{\tilde{A}}=\Co{A}\setminus (\Co{A}_{1}\cup \Co{A}'')$ is non-empty. 

If $\Co{A}=\Co{A}_{1}\cup \Co{A}''$, then we have that the disjoint union of $\phi(\Co{A}_{1}) \cup \phi({\sf rem})\subseteq  \Co{I}_{\sf HighUtility}$. In other words,  $|\Co{I}_{\sf HighUtility}| \geq n$, and  
so there exists a solution in which every bundle is of size one and contains an element from  $\Co{I}_{\sf HighUtility}$. 

Otherwise, let $\Co{\tilde{A}}=\Co{A}\setminus (\Co{A}_{1}\cup \Co{A}'')$. Clearly, each of the \items in $\Co{I}_{\sf HighUtility}$ are assigned to agents in $\Co{A}_{1}\cup \Co{A}''$ and subsets of \items in $\Co{I}_{\sf LowUtility}$ are assigned to agents in $\Co{\tilde{A}}$. In other words, there exist $|\Co{I}_{\sf HighUtility}|$ bundles of size one and $n- |\Co{I}_{\sf HighUtility}|$ bundles of size two. Specifically for the latter, we know that each of the bundles is an independent set, they are pairwise disjoint and the total utility within each bundle is at least $\eta$. Thus, the members of each bundle share an edge in the graph $\Co{\widehat{H}}$ and the bundles themselves form a matching in the graph. Thus, our algorithm that computes a maximum matching in $\Co{\widehat{H}}$ would find a matching of size at least $n- |\Co{I}_{\sf HighUtility}|$. Hence, given the construction of the assignment from such a matching, we can conclude that our algorithm would return an assignment with the desired properties.  
\qed
\end{proof}

\subsection{Distance from chaos: parameterization by \#missing edges from complete graph}\label{ss:distance-from-completion}



In this section, we will prove that \cffa is \fpt with respect to the parameter $t$, the number of edges missing from \Co{H} being a complete graph. Further, we will present a polynomial time algorithm when the degree of every vertex in $\Co{H}$ is $m-2$ (one less than the degree in complete graph) and the utility functions are uniform. 

We first show a result that gives a {\em subexponential time algorithm}  when the number of agents is constant. 


\begin{proof}[of Theorem~\ref{thm:fpt-t+n}] We observe that the complement graph \Co{H}, denoted by $\overline{\Co{H}}$, contains all the vertices of \Co{H} but $t$ edges only. Moreover, each clique in this graph constitutes a \conflict-free bundle in the instance of \cffa. Conversely, we claim that any \conflict-free bundle in the instance of \cffa must form a clique in $\overline{\Co{H}}$ since for every pair of \items $x_{1}, x_{2}$ in a bundle, there exists an edge in $\overline{\Co{H}}$. 

Thus, enumerating all possible cliques (not just maximal ones) in $\overline{\Co{H}}$ allows us to check for possible allocations to agents. To show that this is doable in the claimed time, we will count the number of cliques in $\overline{\Co{H}}$. Since $\overline{\Co{H}}$ has $t$ edges, there can be at most $2t$ vertices that are not isolated. Vertices that are isolated {\em constitute a clique of size $1$}, and are called {\it trivial cliques}. They are upper bounded by the number of \items ($m$), and will be counted separately. A clique is said to be {\em non-trivial} if it does not contain an isolated vertex. Next, we will upper bound the non-trivial cliques. Towards this, we first show that $\overline{\Co{H}}$ is a $2\sqrt{t}$-degenerate graph by a simple counting argument. Note that if there exists a subgraph $H$ with minimum degree at least $2\sqrt{t}$, then the graph must have more than $t$ edges. Let $H$ be the subgraph of $\Co{H}$ induced on the non-isolated vertices of $\Co{H}$. Since $H$ has at most $t$ edges, every subgraph of $H$ has a vertex of degree at most $2\sqrt{t}$. Thus, $H$ is a $2\sqrt{t}$-degenerate graph, and hence has a $2\sqrt{t}$-degeneracy sequence. 
%
%
%
%
%
%
%
%
%


Let $\Co{D}= v_{1}, \ldots, v_{2t}$ denote a $2\sqrt{t}$-degenerate degree sequence of $H$. 
Notice that for any $i\in [2t]$, $v_{i}$ has at most $2\sqrt{t}$ neighbors among $\{v_{j}\colon  j>i\}$. Consider the  $2\sqrt{t}$ neighbors of $v_{1}$ and among them there can be at most $2^{2\sqrt{t}}$ cliques and can be enumerated in time $\Oh(2^{2\sqrt{t}})$. By iterating over $v_{i}$, we can enumerate all the non-trivial cliques  in $\overline{\Co{H}}$ in $\Oh(2t \cdot 2^{2\sqrt{t}})$ time. Indeed, for a non-trivial clique $C$, if $v_i$ is the first vertex in $C$ with respect to $\Co{D}$, that is all other vertices in $C$ appear after $v_i$ in  $\Co{D}$, then $C$ is enumerated when we enumerate all the cliques with respect to $v_i$ in our process.  This implies that the number of independent sets in   \Co{H}  
is upper bounded by $\Oh(2t \cdot 2^{2\sqrt{t}} + m) $ and the number of independent sets of size at least $2$ in   \Co{H}   is upper bounded by  $\Oh(2t \cdot 2^{2\sqrt{t}} ) $. Let $\mathbb{I}_{\geq 2}$ denote the family of independent sets of \Co{H} that have size at least $2$ -- the family of non-trivial independent sets. 

Thus, one potential algorithm is as follows. We first guess which agents are assigned non-trivial independent sets and which independent set.  That is, for each agent $a\in \Co{A}$, we guess  an independent set $I_a\in \mathbb{I}_{\geq 2} \cup \gamma$ ($\gamma$ is just to capture that the agent will not get non-trivial bundle). Let $\Co{A}' \subseteq \Co{A}$ be the set of agents for whom the guess is not $\gamma$. Let $(\Co{A}', \{I_a\}_{a\in\Co{A}'}) $ denote the corresponding guess for the agents in $\Co{A}'$. 
We first check that the guess for $\Co{A}' $  is {\em correct}. Towards that we check that for each $a_1,a_2 \in \Co{A}'$, $I_{a_1}\cap I_{a_2}=\emptyset$ and for each $a \in \Co{A}'$, we have that $\sum_{i\in I_a}\util_{a}(i) \geq \eta$.  Since, $|\mathbb{I}_{\geq 2}|$ is upper bounded by $\Oh(2t \cdot 2^{2\sqrt{t}})$, the number of guess are upper bounded by  $\Oh((2t \cdot 2^{2\sqrt{t}}+1)^n)$. For each correct  guess $(\Co{A}', \{I_a\}_{a\in\Co{A}'}) $, we solve the remaining problem by invoking Theorem~\ref{thm:poly-clique}. Let $\Co{A}^\star=\Co{A}\setminus \Co{A}'$ and $\Co{I}^\star=\Co{I}\setminus (\bigcup_{a\in\Co{A}'} I_a)$.   Then, we apply Theorem~\ref{thm:poly-clique} on the following instance: $(\Co{A}^\star,\Co{I}^\star,(p_a)_{a\in \Co{A}^\star},\eta, \Co{H}[\Co{I}^\star])$, here $\Co{H}[\Co{I}^\star]$ is a clique. This implies that the total running time of the algorithm is upper bounded by  $\Oh((2t \cdot 2^{2\sqrt{t}}+1)^n(n+m)^{\Oh(1)})$. 
\end{proof}


 However, Theorem~\ref{thm:fpt-t+n} is not an \fpt algorithm {\em parameterized by $t$ alone}. In what follows we design such an algorithm.  
%
\begin{proof}[of Theorem~\ref{thm:fpt-t}]
Let $\X{J}=(\A, \I, \{\util_{a}\}_{a\in \A}, \Co{H}, \eta)$ be a given instance of \cffa.  
Let $V_{>1}$ be the set of vertices which are part of independent sets of size at least $2$. As argued in the proof of \Cref{thm:fpt-t+n}, $|V_{>1}|\leq 2t$. Thus, there are at most $t$ bundles that contains more than one \object.  We guess partition of the \objects in $V_{>1}$  into at most $t+1$ sets, $\mathsf{notLarge, Large_1, \ldots, Large_\ell}$, where $\ell \leq t$, such that each set is an independent set in $\Co{H}$. The set $\mathsf{notLarge}$ might be empty. This contains the set of \objects in $V_{>1}$ which will not be part of any bundle of size at least $2$. The size of $\mathsf{Large}_i$ is at least $2$, for every $i\in [\ell]$, and each $\mathsf{Large}_i$ will be assigned to distinct agents in the solution. Next, we construct a complete graph $\Co{H}'$ as follows. For each $\mathsf{Large}_i$, where $i\in [\ell]$, we have a vertex $\mathsf{Large}_i$ in $\Co{H}'$, and $\util'_a(\mathsf{Large}_i)=\sum_{x\in \mathsf{Large}_i}\util_a(x)$, where $a\in \A$. If a vertex $v \in \Co{H}$ does not belong to any $\mathsf{Large}_i$, where $i\in [\ell]$, then add the vertex $v$ to $\Co{H}'$, and $\util'_a(v)=\util_a(v)$. Let $\X{J}'=(\A, \I, \{\util'_{a}\}_{a\in \A}, \Co{H}', \eta)$ be the new instance of \cffa where $\Co{H}'$ is a complete graph. Using \Cref{thm:poly-clique}, we find the assignment of bundles to the agents for the instance $\X{J}'$, if it exists, and return ``yes''. If the algorithm does not find the assignment for any guessed partition, then we return ``no''. The running time follows from \Cref{thm:poly-clique} and the fact that there are at most $(2t)^{t+1}$ possible partitions. 

Next, we prove the correctness of the algorithm. Suppose that $\X{J}$ is a yes-instance of \cffa and $\phi$ be one of its solution. Let $\Co{B}=\{\phi(a)\colon a\in \A \text{ and } |\phi(a)|\geq 2\}$. Clearly, sets in $\Co{B}$ are disjoint subsets of $V_{>1}$. Let $\Co{B}=\{B_1,\ldots,B_\ell\}$. Let $X\subseteq V_{>1}$ contains all the \objects that do not belong to any set in $\Co{B}$.  Since we try all possible partitions of $V_{>1}$, we also tried $B_1,\ldots,B_\ell,X$. Without loss of generality, let $B_i$ is assigned to $a_i$ under $\phi$. Thus, in the graph $\Co{H}'$, there is a matching $M=\{a_iB_i \in i\in [\ell]\} \cup \{a\phi(a) \colon |\phi(a)|=1\}$ that saturates $L$ in the proof of \Cref{thm:poly-clique}. Thus, the algorithm returns ``yes''. The correctness of the other direction follows from the correctness of \Cref{thm:poly-clique} and the construction of the instance $\Co{J}'$.
\end{proof}

Next, we give our claimed polynomial-time algorithm. 

\begin{proof}[of \Cref{thm:polytime-n-2}]
The algorithm is same as in \Cref{thm:2cliques}. Here, the size of ${\sf IS}$ is bounded by $\nicefrac{n}{2}$. 
\end{proof}

\section{Outlook}

In this article, we studied conflict-free fair allocation problem under the paradigm of parameterized complexity with respect to several natural input parameters. We hope that this will lead to the new set of results for the problem. The following question eludes so far: (i) the computation complexity of \cffa when the cluster graph contains only $2$ cliques with arbitrary utility functions, (ii) the computational complexity when the degree of every vertex in the  conflict graph is $n-2$ with arbitrary utility functions. 

Another direction of research is to consider various other fairness notions known in the literature, such as envy-freeness, proportional fair-share, min-max fair-share, etc., under the conflict constraint.


 \bibliographystyle{plain}
 \bibliography{references}

%
\end{document}